\newcommand{\revision}[1]{{\black #1}}
\newif\ifjrnl
  \newcommand{\omitproofdisclaimer}[1]{}
  \newcommand{\fullproof}[1]{{\color{black} #1}} %blue
  \newcommand{\proofsketch}[1]{}
  \newcommand{\jrnlonly}[1]{{\color{black} #1}} %orange
  \newcommand{\tr}[1]{{\color{blue} #1} } %blue
  \newcommand{\nonjrnlonly}[1]{}
  \newcommand{\omitproofdisclaimer}[1]{#1}
  \newcommand{\fullproof}[1]{}
  \newcommand{\proofsketch}[1]{#1}
  \newcommand{\jrnlonly}[1]{}
  \newcommand{\tr}[1]{}
  \newcommand{\nonjrnlonly}[1]{#1}
\pgfplotsset{compat=1.15}
\newcommand{\plotLineWidth}{0.8 pt}
\newcommand{\plotWidth}{0.55*\columnwidth}
\newcommand{\plotHeight}{0.5*\columnwidth}
\newcommand{\plotLegendYOffset}{1.45}
\algnewcommand{\AND}{\textbf{and}\xspace}
\newbox\statebox
\newcommand{\myState}[1]{%
    \setbox\statebox=\vbox{#1}%
    \edef\thealgruleheight{\dimexpr \the\ht\statebox+1pt\relax}%
    \edef\thealgruledepth{\dimexpr \the\dp\statebox+1pt\relax}%
    \ifdim\thealgruleheight<.75\baselineskip
        \def\thealgruleheight{\dimexpr .75\baselineskip+1pt\relax}%
    \fi
    \ifdim\thealgruledepth<.25\baselineskip
        \def\thealgruledepth{\dimexpr .25\baselineskip+1pt\relax}%
    \fi
    %\showboxdepth=100
    %\showboxbreadth=100
    %\showbox\statebox
    \State #1%
    %\State \usebox\statebox
    %\State \unvbox\statebox
    %reset in case the next command is not wrapped in \myState
    \def\thealgruleheight{\dimexpr .75\baselineskip+1pt\relax}%
    \def\thealgruledepth{\dimexpr .25\baselineskip+1pt\relax}%
}
\def\blfootnote{\xdef\@thefnmark{}\@footnotetext}
\newtheorem{theorem}{Theorem}
\newtheorem{corollary}[theorem]{Corollary}
\newtheorem{lemma}[theorem]{Lemma}
\newtheorem{proposition}[theorem]{Proposition}
\newcommand{\abs}[1]{\left\vert#1\right\vert}
\newcommand{\set}[1]{\left\{#1\right\}}
\newcommand{\floor}[1]{\lfloor #1 \rfloor}
\DeclareMathOperator*{\argmin}{arg\,min}
\DeclareMathOperator{\alg}{Alg}
\DeclareMathOperator{\ds}{DS}
\newcommand{\dspgm}{$\ds_{\pgm}$}
\newcommand{\SetCaches}{N} %Number of caches
\newcommand{\NumCaches}{n} %Number of caches
\newcommand{\SetP}{N_x} %Set of caches with positive indications
\newcommand{\numP}{n_x} %Number of positive indications
\newcommand{\mS} {S}   % Set of element in a cache
\newcommand{\ind}{I} %Indication of indicator (either 0 or 1)
\newcommand{\accsCost} {c}   %Access cost
\DeclareMathOperator{\fpr}{FP} %false pos. rate
\DeclareMathOperator{\fnr}{FN} %false pos. rate
\newcommand{\fprj}{\fpr_j} %false pos. rate
\newcommand{\fnrj}{\fnr_j} %false neg. rate
\newcommand{\Phit}{h}
\newcommand{\Pmiss}{(1-\Phit)}%%{p_{m}} %per-DS hit rate
\newcommand{\Phitj}{\Phit_j} %per-DS hit rate
\newcommand{\mr}{\rho} % miss probability, given a positive indication
\newcommand{\mrj}{\mr_j} % miss probability, given a positive indication
\newcommand{\mrone}{\pi} % miss probability, given a positive indication
\newcommand{\mrzero}{\nu} % miss probability, given a negative indication
\newcommand{\mronej}{\mrone_j} % miss probability, given a positive indication
\newcommand{\mrzeroj}{\mrzero_j} % miss probability, given a negative indication
\newcommand{\missp}{M} % Miss penalty
\newcommand{\costhetero}{\phi} %cost func' 
\newcommand{\costhomo}{\hat{\costhetero}} %cost func' in the homo' case
\newcommand{\pif}{PI} %Perfect Indicator filter
\newcommand{\tcpif}{\costhomo^{\pif}} % Expected total cost of a Perfect Indicator Filter
\newcommand{\Pone}{q} % Pr (I_j(x)==1)
\newcommand{\Ponej}{\Pone_j} % Pr (I_j(x)==1)
\newcommand{\ecmfna}{HoCS$_{\fna}$} % Minimize Expected Cost --> Homogeneous SElection Strategy
\newcommand{\ecmfno}{HoCS$_{\fno}$} % Minimize Expected Cost --> Homogeneous SElection Strategy
\DeclareMathOperator{\fna}{FNA}
\DeclareMathOperator{\fno}{FNO}
\DeclareMathOperator{\pgm}{PGM}
\newcommand{\pgmfna}{HeCS$_{\fna}$} %Speculative-Cache Selection algorithm  (Neg. Aware)
\newcommand{\pgmfnastar}{HeCS$_{\fna^*}$} %Speculative-Cache Selection algorithm  (Neg. Aware)
\newcommand{\pgmfno}{HeCS$_{\fno}$} %Traditional-Cache Selection algorithm  (negative - Obl)
\newcommand{\bpe}{bpe}
\DeclareMathOperator{\mbpe}{\bpe}
\newcommand{\argminCosthomoOne}{r^*_1} %cost func' in the homo' case, using real numbers
\newcommand{\argminCosthomoZero}{r^*_0} %cost func' in the homo' case, using real numbers
\newcommand{\subsetCaches}{D} %{\mathcal{L}} %Set of caches
\newcommand{\cacheprobFna}{CS$_{\fna}$} % The false-negative-aware cache selection problem
\newcommand{\cacheprobFno}{CS$_{\fno}$} % The false-negative-aware cache selection problem
\newcommand{\cacheSize}{C}
\date{}
\title{
% On the Power of 
False Negative Awareness in
Indicator-based Caching Systems
}
\author{
\IEEEauthorblockN
{
Itamar Cohen\IEEEauthorrefmark{1},
Gil Einziger\IEEEauthorrefmark{2}, and
Gabriel Scalosub\IEEEauthorrefmark{3}
}

\IEEEauthorblockA
{
\IEEEauthorblockA{\IEEEauthorrefmark{1}Department of Electronics and Telecommunications, Politecnico di Torino, Italy}\\
\IEEEauthorblockA{\IEEEauthorrefmark{2}Department of Computer Science, Ben-Gurion University of the Negev, Beer Sheva, Israel}\\
\IEEEauthorblockA{\IEEEauthorrefmark{3}School of Electrical and Computer Engineering, Ben-Gurion University of the Negev, Beer Sheva, Israel}\\
Email: itamar.cohen@polito.it, gilein@bgu.ac.il, sgabriel@bgu.ac.il
}
}
\begin{document}
\maketitle
\begin{abstract}
Distributed caching systems such as content distribution networks often advertise their content via lightweight approximate indicators (e.g., Bloom filters) to efficiently inform clients where each datum is likely cached.
While false-positive indications are necessary and well understood, most existing works assume no false-negative indications.
Our work illustrates practical scenarios where false-negatives are unavoidable and ignoring them significantly impacts system performance.
Specifically, we focus on false-negatives induced by indicator staleness, which arises whenever the system advertises the indicator only periodically, rather than immediately reporting every change in the cache. Such scenarios naturally occur, e.g., in bandwidth-constraint environments or when latency impedes each client's ability to obtain an updated indicator.
Our work introduces novel false-negative aware access policies that continuously estimate the false-negative ratio and sometimes access caches despite negative indications.
We present optimal policies for homogeneous settings and provide approximation guarantees for our algorithms in heterogeneous environments.
We further perform an extensive simulation study with multiple real system traces.
We show that our false-negative aware algorithms incur a significantly lower access cost than existing approaches or match the cost of these approaches while requiring an order of magnitude fewer resources (e.g., caching capacity or bandwidth).
\end{abstract}

\section{Introduction}
\label{Sec:Intro}
\blfootnote{* The work was done while this author was with Ben-Gurion University.}
\blfootnote{An earlier version of this work was published in~\cite{FN_aware_ICDCS}.}
Caches are extensively used in networking environments such as Content Delivery Networks~\cite{summary_cache, CDN_theory_Vs_practice,
% CDN_AdaptSize
Joint_opt}, Named Data Networks~\cite{indicators_in_NDN19}, 5G networks~\cite{5GNetworks},  
and Information-Centric Networks~\cite{ICN_survey_15}.
In such networks, accessing caches often incurs some overhead in terms of latency, bandwidth, or energy~\cite{Joint_opt, BloomParadox}. On the other hand, fetching a datum without caches usually incurs a larger {\em miss penalty}, e.g., for retrieving the requested item from a remote server~\cite{CDN_theory_Vs_practice}. 

In large distributed systems, caches often
% further 
optimize performance by advertising their content~\cite{summary_cache, CDN_theory_Vs_practice,Joint_opt, indicators_in_NDN19,ICN_survey_15,
% uIntervalInMANET, 
Digest_in_Manet}.
Such advertisements allow clients to minimize costs by selecting which cache to access for a requested datum.
Ideally, the advertisement policy would always accurately reflect the up-to-date content at every cache.
However, such a solution requires a prohibitive amount of memory, computation, and bandwidth resources.
Hence, systems often compromise some accuracy for efficiency by advertising periodical approximate {\em indicators}. Indicators are data structures that trade accuracy for space efficiency.
Typical embodiments of such indicators are Bloom filters~\cite{Bloom, CBF, Survey18, BloomParadox}, and fingerprint hash tables~\cite{TinyTableJournal}.

Such approximations commonly introduce the risk of {\em false-positive} errors, i.e., the indicator sometimes wrongly indicates that a datum is stored in the cache.
In such a case, accessing the cache results in an unnecessary cache access, translating to an excessive cost.
Consequently, the problem of advertising space-efficient indicators while keeping a low false-positive rate has attracted a bulk of research effort~\cite{Bloom, CBF, BloomParadox, Survey12, Survey18}.
Other works addressed the {\em cache selection} problem, namely, selecting which cache to access when there are more than one positive indications, where some of them may actually be false-positives~\cite{BloomParadox, Accs_Strategies_ToN}. 

Most previous works~\cite{BloomParadox,Accs_Strategies_ToN,Joint_opt} assume that there are no {\em false-negative} indications.
Indeed, there exist indicators that theoretically guarantee a false-negative ratio of zero (e.g., a simple Bloom filter~\cite{Bloom}), or a negligible false-negative ratio (e.g., a Counting Bloom Filter (CBF)~\cite{CBF}).
However, to manifest this guarantee in a practical distributed environment, every cache should advertise its indicator to all the clients in the network upon every change in the cached content, usually resulting in prohibitive bandwidth consumption.
For instance, Akamai reports using Bloom filters of about 70MB in size in every cache~\cite{CDN_theory_Vs_practice}.
Insisting on sending an update upon every change in the cached content in such a system may result in the advertised indicators consuming more bandwidth than the cached content itself.
Hence, caches commonly advertise their content only periodically. 

When using periodical updates, the advertised content gradually becomes {\em stale}.
Namely, it takes time for the indicator available at the clients to reflect changes in the cached content.
Unfortunately, such staleness may lead to a significant increase in false-negative indications.
To illustrate the problem, consider a cache that advertises a fresh indicator, and later admits a new item $x$.
When the client tests for $x$, the indicator is likely to wrongly indicate that $x$ is not in the cache (as it wasn't in the cache when the cache sent the advertisement), resulting in a false-negative error.
Such scenarios are quite common in highly dynamic networks, such as 5G networks~\cite{5GNetworks}.

\begin{figure}[th]
    \captionsetup{width=\linewidth}
	\centering
    \subfloat[
    \label{fig:Fn_Vs_uInterval}
    Effect of the update interval on the false-negative ratio. Both axes are in log-scale, the cache size is 10K, and the policy is LRU. The traces are Wiki and Gradle (described in Sec.~\ref{sec:sim}). The distinct plots for increasing values of bits-per-cached-element ($\bpe$), correspond to increasing indicator sizes, with decreasing false-positive ratios, respectively.]{
    % \begin{figure}[ht]
    \begin{tikzpicture}
        log bases x={2},
        log bases y={10},
        \begin{groupplot}[
            group style=
                {
                columns         = 2,
                xlabels at      = edge bottom,
                ylabels at      = edge left,
                horizontal sep  = 0.07\columnwidth,
                group name      = plots
                },
    		width  = \plotWidth,
     		height = \plotHeight,
    		xmode = log,
    		ymode = log,
    		xmin = 2,
     		xmax = 8192,
     		xtick = {2, 16, 128, 1024, 8192},
     		xticklabels = {2,16,128,1K,8K},
     		ymin = 0.0005,
     		ymax = 0.6,
     		minor tick style={draw=none},
            ymajorgrids = true,
            ylabel = {Ratio},
     		ylabel near ticks,
    		legend style = {at={(-0.1,\plotLegendYOffset)}, anchor=north, legend columns=-1, font=\footnotesize},
    		xlabel = {Update Interval (\# Insertions)}, 
    		xlabel near ticks,
    	    label style={font=\footnotesize},
    	    tick label style={font=\footnotesize},
    	]
        	
        \nextgroupplot[
            title = Wiki,
            title style ={
                font=\footnotesize,
                yshift = -2pt
                },
     		]
% \addplot[color=black, 	mark=triangle, 	width = \plotwidth] coordinates {
% (2, 0.0005182302263866136)(4, 0.0012010883994105857)
% (8, 0.0025830356697748888)(16, 0.004631610188846804)
% (32, 0.007241310948569998)(64, 0.011232958978657493)
% (128, 0.017694026208305364)
% (256, 0.026131788149338416)
% (512, 0.037540968592002094)
% (1024, 0.057949037216350106)
% };

% \addplot[color=red, 	 	mark=o, 				width = \plotwidth] coordinates {
% (2, 0.0006735833591289095)
% (4, 0.001465420595419865)
% (8, 0.003060688585370604)(16, 0.005380551410873096)
% (32, 0.008484136010508365)(64, 0.013290808311624472)
% (128, 0.021291494647852707)(256, 0.03279690128467766)
% (512, 0.049396499453365654)(1024, 0.08412604008342695)
% (2048, 0.11871645850506231)(4096, 0.1494068176838239)(8192, 0.18243906737108878)
% };

\addplot[color=blue,  	mark=x, 				width = \plotwidth] coordinates {
(2, 0.0007408257598681122)
(4, 0.00154193781005413)(8, 0.003211404311165369)(16, 0.005695895083305219)
(32, 0.008935123836155775)(64, 0.014197421369866825)
(128, 0.02304095641880886)(256, 0.0365833440573369)
(512, 0.05779252473187093)(1024, 0.10830779861132848)(2048, 0.17132320291785644)
(4096, 0.2333554769515078)(8192, 0.30226386613661105)
};

% \addplot[color=purple, mark=+, 				width = \plotwidth] coordinates {
% (2, 0.0007941559397647212)(4, 0.0016114989142670985)
% (8, 0.0032345913459030247)(16, 0.00572951628367482)
% (32, 0.009061493175476)(64, 0.014503490228403885)
% (128, 0.02359164849382819)(256, 0.03856583552740649)
% (512, 0.06298642051310589)(1024, 0.12646904356959762)
% (2048, 0.21345636373965135)
% (4096, 0.30691750400845863)
% (8192, 0.40231360232612334)
% };

        \nextgroupplot[
            title = Gradle,
            title style ={
                font=\footnotesize,
                yshift = -2pt,
                },
      		yticklabels = \empty
      	]
% \addplot[color=black, 	mark=triangle, 	width = \plotwidth] coordinates {
% (4, 0.001381039641507355)
% (8, 0.002682918219132801)(16, 0.0048865910785481)
% (32, 0.008549394899246152)(64, 0.014754980461123895)
% (128, 0.024537790315178706)(256, 0.038479765255353265)
% (512, 0.05625677281737436)(1024, 0.08259815169613277)
% (2048, 0.1077916951486035)(4096, 0.1488859150002728)(8192, 0.191324803193828)
% };\addlegendentry {2-BPE}

\addplot[color=red, 	 	mark=o, 				width = \plotwidth] coordinates {
(2, 0.0008012383357776986)
(4, 0.0018859588597811517)
(8, 0.003618944312331047)(16, 0.00644306875352347)
(32, 0.011139245381112384)(64, 0.019073537419113974)
(128, 0.03086532670519885)(256, 0.0493098545896836)
(512, 0.07217777435459664)(1024, 0.10739695957330048)
(2048, 0.14588956175793188)(4096, 0.20278390206278757)(8192, 0.263361371326365)
};\addlegendentry {4-$\bpe$}

\addplot[color=blue,  	mark=x, 				width = \plotwidth] coordinates {
(2, 0.0011938344228677059)
(4, 0.0027032433571565345)
(8, 0.004945427004406276)(16, 0.008447769209127485)
(32, 0.014445824414341845)(64, 0.023389954888891028)
(128, 0.03759080790284156)(256, 0.05775120533417201)
(512, 0.09082662336342523)(1024, 0.12636031334944367)
(2048, 0.17487320857977562)(4096, 0.24456061865441167)(8192, 0.319719769833858)
};\addlegendentry {8-$\bpe$}

\addplot[color=purple, mark=+, 				width = \plotwidth] coordinates {
(2, 0.0021138143544682676)(4, 0.0040500511872554965)
(8, 0.00673403915049481)(16, 0.011087897663999795)
(32, 0.018005932800814715)(64, 0.0283696137047057)
(128, 0.04383597399666026)(256, 0.06598074674557099)
(512, 0.10294789383431589)(1024, 0.14181169722390707)(2048, 0.20789406965959673)
(4096, 0.2765769900182177)(8192, 0.3686338191041321)
};
\addlegendentry {16-$\bpe$}

        \end{groupplot} 
    \end{tikzpicture}
    % \caption{Effect of the update interval on the false-negative ratio. Both axes are in log-scale, the cache size is 10K, and policy is LRU. The traces are Wiki and Gradle (described in Sec.~\ref{sec:sim}). The distinct plots for increasing values of bits-per-cached-element ($\bpe$), correspond to increasing indicator sizes, with decreasing false-positive ratios, respectively.}
    % \label{fig:Fn_Vs_uInterval}
% \end{figure}
    }
    
    \subfloat[
    \label{fig:toy_example}
    The client is looking for item $x$ and needs to select a cache. Accessing each cache incurs some known access cost. Each cache $j$ provides an indication w.r.t. item $x$, where $I(x)=1$ means that the item is likely to be cached, and $I(x)=0$ means that it is likely not cached. The true answer is captured by $C(x)$, which is ``Yes'' iff $x$ is found in the cache.  The indications of caches 1, 2, and 3 for item 3 are true-negative, false-positive, and false-negative, respectively. Failing to retrieve $x$ from a cache incurs a miss penalty of 100.]{
    	\centering
        \small
        \begin{tabular}{|c|c|c|c|}
        \cline{2-4}
        \multicolumn{1}{c|}{} & Cache 1 & Cache 2 & Cache 3 \\
        \hline
        Cost & 10 & 20 & 1 \tabularnewline
        \hline
        Indication & I(x) = 0 & I(x) = 1 & I(x) = 0 \tabularnewline
        \hline
        True answer & C(x) = No & C(x) = No & C(x) = Yes \tabularnewline
        \hline
        \end{tabular}
    }
    \label{fig:motivation}
    \caption{Motivation for a false-negative aware approach.}
\end{figure}

To explore the significance of false-negatives caused due to staleness, consider Fig.~\ref{fig:Fn_Vs_uInterval}.
The figure presents the false-negative ratio indications as a function of the time between subsequent advertisements, referred to as the {\em update interval} (both axes are in logarithmic scale).
We measure the update interval by the number of cache changes (insertions of new items).
The indicator used is an optimally configured simple Bloom filter~\cite{Bloom}, where the figure shows distinct indicators with varying number of Bits Per cached Element (\bpe); a higher \bpe\ implies a larger indicator, that is guaranteed to provide a lower false-positive ratio~\cite{Survey12}.
Fig.~\ref{fig:Fn_Vs_uInterval} shows that the false-negative ratio dramatically increases for all indicator sizes. Furthermore, this phenomenon is manifested for various types of workloads, where Fig.~\ref{fig:Fn_Vs_uInterval} shows this for two specific traces, Wiki and Gradle (which represent significantly distinct workloads, as described in Sec.~\ref{sec:sim_settings}).
For instance, it is not uncommon to have a false-negative ratio as high as 10\% when the update interval is above 1K.
Most interestingly, using a larger indicator, which guarantees a lower inherent false-positive ratio~\cite{Survey12}, results in a {\em higher} false-negative ratio.
We discuss and explain this phenomenon in more detail in Sec.~\ref{sec:sim:ind_parameters}.

Fig.~\ref{fig:toy_example} exemplifies the potential benefit of a false-negative aware access strategy. A false-negative oblivious strategy would access only caches with positive indication, i.e., cache 2. However, this indication is a false-positive, thus incurring a miss penalty of 100 for a total cost of 120. In contrast, a false-negative-aware approach may access both caches 2 and 3, implying an access cost of 21 and a hit. Intuitively, as the miss penalty is 100 while the access cost of cache 3 is only 1, it is beneficial to access cache 3 despite a negative indication if the likelihood of a false-negative event is more than $1\%$. 

Designing an access strategy that considers both false-positives and false-negatives
is a challenging task. In particular, it is unclear whether, or when, it may be beneficial to access a cache despite a negative indication.
Furthermore, it is unclear even how to estimate the probability of a false-negative event.
However, to the best of our knowledge, despite its importance, this problem has never been studied.

We stress that we do not address cache replacement policies, the design of new indicators, or optimizing update intervals.
We focus our attention on a false-negative-aware {\em cache access strategy}, namely, which cache(s) should the client access, being aware of the non-negligible probability of false-negatives, which naturally arises in scenarios where indicators become stale.

\subsection{Our Contribution}

We consider the problem of accessing a multi-cache system while using indicators that exhibit both false-positive and false-negative indications.
We challenge the common practice, which assumes that it is {\em always} better not to access caches with negative indications.
Specifically, we develop a framework that supports {\em false-negative awareness}, and design policies that actively access caches with negative indications, aiming at minimizing the overall access cost.

After presenting our system model and some preliminaries in Sec.~\ref{sec:model}, we consider Bloom filters as our main example of indicators (Sec.~\ref{sec:bloom_filters}).
Specifically, we discuss how staleness affects the false-negative and false-positive ratios manifested in such indicators.

We then turn to present an algorithm for fully-homogeneous environments, and show that it is optimal in terms of the overall access cost,
assuming that the false-negative ratio, the false-positive ratio, and the hit ratio, are all known to the client.
These results appear in Sec.~\ref{sec:homo}.
Later, in Sec.~\ref{sec:hetero}, we develop a strategy for realistic environments that are both heterogeneous (i.e., distinct caches have distinct attributes) and dynamic (i.e., the client may not know the precise attributes of each cache). We show how both the cache and the client may cooperate, allowing the client to estimate some of the underlying distributions (which may depend inter alia on the system configuration and the workload being served).
We make explicit use of our analysis of Bloom filters, to serve as a concrete example of how to implement our approach.

Our suggested false-negative aware ($\fna$) approach makes deliberate accesses also
to caches with negative indications.
Furthermore, we show that any approximation guarantee provided by a false-negative oblivious ($\fno$) access strategy (in our model), can be used by our false-negative aware framework.
In particular, we show how to employ known $\fno$ strategies as subroutines, which induce their performance guarantees on our proposed $\fna$ solution.

Finally, in Sec.~\ref{sec:sim}, we present the results of our in-depth simulation study, where we evaluate the performance of our proposed solution in varying system configurations.
We show that our $\fna$ strategy implies a significant reduction in access costs in many real-life scenarios, compared to state-of-the-art $\fno$ approaches.
Furthermore, our results show that our $\fna$ strategy with minimal resources obtains comparable results to those obtained by $\fno$ strategies that use considerably more resources.
For instance, our results indicate that to match the performance of our $\fna$ strategy, an $\fno$ approach might require as much as an order of magnitude more resources (e.g., in terms of system caching capacity or the bandwidth required for indicator advertisement).

\subsection{Related Work}

% \paragraph*{Indicators}
Indicators are typically used to periodically advertise the content of caches efficiently. Indicators are used in multiple networking environments, including wide-area networks~\cite{summary_cache}, content delivery networks~\cite{CDN_theory_Vs_practice,Joint_opt}, information-centric networking~\cite{indicators_in_NDN19,ICN_survey_15}, and wireless networks~\cite{uIntervalInMANET, Digest_in_Manet}.

% Indicators make use of randomized hash-based data structures such as Bloom filters~\cite{Bloom,CBF, Survey12, Survey18}, and fingerprint hash tables~\cite{TinyTableJournal}. 
Since indicators are of bounded size, they usually fail to represent the cache content accurately and exhibit false-positive indications~\cite{Bloom, Survey12}. 
The pioneering work of~\cite{BloomParadox} shows that due to these false-positives, sometimes naively relying on an indicator for accessing even a single cache may do worse than not using an indicator at all.
Subsequent works~\cite{Accs_Strategies_ToN, chen2020sequential} tackle a distributed scenario where multiple caches advertise indicators, and develop access strategies that take into account both the access cost, and the false-positive ratio in each cache, to minimize the overall expected cost. However,~\cite{BloomParadox, Accs_Strategies_ToN, chen2020sequential} disregard {\em false-negative} indications. 

% \paragraph*{False-negative}
The work of~\cite{fnr_in_CBFs} studies the problem of false-negatives in practical deployment of Counting Bloom Filters~\cite{CBF}.
Other techniques to reduce the false-negative ratio in numerous variants of Bloom filters are surveyed  in~\cite{Survey18}.
However, these works address false-negatives that stem from {\em architectural design}, i.e., from concrete data structures used to implement indicators. Consequently, these works focus on developing enhanced data structures that reduce such false-negatives.
In contrast, we focus on false-negatives caused by staleness, i.e., false-negatives that follow from the {\em operational usage} of the system.
Such false-negatives may occur in {\em any} indicator, even if its design is false-negative-free, such as a simple Bloom filter~\cite{Bloom, Survey12}. 
% Such false-negatives potentially happen whenever a newly cached items is not yet represented in the stale indicators, and therefore cannot be resolved by a structural change in the indicator itself. 
In this sense, our approach is orthogonal to previous work targeting the reduction of false-negatives~\cite{fnr_in_CBFs, Survey18}, and these approaches may be seamlessly combined with our solutions.  

% \paragraph*{Staleness}
Since constantly advertising a fresh indicator might be prohibitively costly,
% Ideally, each cache would advertise a fresh indicator, which reflects its current content, upon every change in the cache. However, such a solution consumes prohibitively large networking resources. 
% Therefore, 
in practice caches commonly advertise fresh indicators only periodically~\cite{summary_cache, fpr_fnr_in_dist_replicas, uIntervalInMANET, uIntervalSimsInCCN, CAB}, where one usually refers to the period between the advertisements of fresh indicators as the {\em update interval}.
Several works
% \revision{
address
% }
the interplay between the update interval and performance by means of simulations~\cite{summary_cache,uIntervalInMANET}.
The work~\cite{CAB}
% \revision{
presents
% }
an algorithm that dynamically scales the update interval and the indicator size, to comply with bandwidth constraints.
The works~\cite{FPfree_Ori, HBA_journal} reduce the transmission overheads by accurately advertising important information, while allowing less important information to be stale, or less accurate.  
The work~\cite{fpr_fnr_in_dist_replicas}
% \revision{
analyzes
% }
the impact of stale Bloom filter replicas on the false-positive ratio and the false-negative ratio.
However, the framework of~\cite{fpr_fnr_in_dist_replicas} implicitly assumes that requests are drawn from a uniform distribution, and that each object is stored in a single cache.
This framework conforms primarily with distributed storage systems. 
However, in many (if not most) real-life distributed caching environments, requests need not be drawn from a uniform distribution, and furthermore objects may be found in either a single cache, multiple caches, or no cache at all~\cite{indicators_in_NDN19,ICN_survey_15,Joint_opt}.
Part of our analysis of such general environments is inspired by ideas introduced in~\cite{fpr_fnr_in_dist_replicas} (see, e.g., Sec.~\ref{sec:cache-side-alg}).

% \paragraph*{Decision making under uncertainty}
The problem of stale indicators relates to other problems of decision-making under uncertainty. In particular, our problem is closely related to the concept of the {\em Age of Information} (AoI).
The AoI quantifies the time since the generation of the last successfully received information from a remote system.
The AoI paradigm was applied to numerous environments, e.g., vehicular networks, scheduling, and buffer management; a detailed survey can be found in~\cite{AgeOfInfo_survey_17}.
The AoI was also applied to caches~\cite{AoI_in_cache_data}, but in the context of the coherency of the cached data, while we focus on the coherency of the indicators. 

\jrnlonly{
Lastly, the question of whether to follow the recommendation of a binary indicator has been extensively studied in the context of branch prediction~\cite{CLBP}.
However, the models used to study such systems significantly differ from those considered in our work.
In particular, branch-prediction models do not adhere to or follow traditional cache-memory models, which lay at the core of our work.
}

% ================================================
% Section
% ================================================

\section{System Model and Preliminaries}\label{sec:model}

\begin{table}[t]
	\centering
	\caption{\label{tbl:bf:notations}List of Symbols. The top part corresponds to our system model (Sec.~\ref{sec:model}), the middle part corresponds to the structure of Bloom filters (Sec.~\ref{sec:bloom_filters}), and the bottom part corresponds to the fully-homogeneous case (Sec.~\ref{sec:homo}), and the heterogeneous settings (Sec.~\ref{sec:hetero}-\ref{sec:sim}).}
	 	\begin{tabular}{|m{1cm}|m{7.0cm}|}   
		\hline
		Symbol & Meaning \tabularnewline
		\hline
		$\SetCaches$ & Set of caches.\tabularnewline
		\hline
		\revision{$[i]$}& \revision{The set of integers $\set{0, \ldots, i}$} \tabularnewline
    	\hline
		$\NumCaches$ & Number of caches: $\NumCaches = \abs{\SetCaches}$ \tabularnewline
% 		\hline
%         $x$ ($x_t$) & Item request (issued at time $t$)\tabularnewline
		\hline		
		 $\SetP$ & Set of caches with positive indications \tabularnewline
		    \hline
		$\numP$ & Number of positive indications: $\numP = \abs{\SetP}$ \tabularnewline
		\hline
		$\mS_j$ & The set of data items in cache $j$\normalsize
		\tabularnewline
		\hline
		$\Phitj$ & Hit ratio of cache $j$: $\Phitj = \Pr (x \in S_j)$ \tabularnewline
		\hline
		$\ind_j$ & Indicator of cache $j$ \tabularnewline
		\hline
		$\ind_j (x)$ & Indication of indicator $\ind_j$ for item $x$ \tabularnewline
		\hline
% 		\rule{0pt}{2ex}    
		$\fprj$ & False-positive ratio for $\ind_j$ 
		: $\fprj = \Pr(\ind_j (x) = 1 | x \notin \mS_j)$ 
        \tabularnewline
		\hline
		$\fnrj$ & False-negative ratio for $\ind_j$ :
		$\fnrj = \Pr(\ind_j (x) = 0 | x \in \mS_j)$ \tabularnewline
		\hline
		$\mronej$
        & Probability of a miss in cache $j$ given a positive indication \tabularnewline
		\hline
        $\mrzeroj$ 
        & Probability of a miss in cache $j$ given a negative indication \tabularnewline
		\hline
		$\Ponej$ & Probability of a positive indication in indicator $\ind_j$
            % : $\Ponej = \Pr (\ind_j = 1)$}
        \tabularnewline
		\hline
        $\accsCost_j$ & Access cost of cache $j$
        \tabularnewline
		\hline
		$\missp$  & Miss penalty \tabularnewline
		\hline
		\rule{0pt}{2ex}    
		$\costhetero$  & Cost function~\eqref{eq:def_service_cost}.\tabularnewline
		\hline
		\hline
		$k$ & Number of hash functions in the Bloom filter\tabularnewline
		\hline
        $\cacheSize_j$ & Size of cache $j$ ($\abs{\mS_j} \leq \cacheSize_j$)\tabularnewline
		\hline
		\bpe & Bits per cached element in the Bloom filter\tabularnewline
		\hline
		$B_1(t)$ & Number of '1' bits in the updated Bloom filter at time $t$\tabularnewline
		\hline
		$B_0(t)$ & Number of '0' bits in the updated Bloom filter at time $t$\tabularnewline
		\hline
		$\Delta_1(t)$ & Number of bits that are '1' in the updated Bloom filter,\\ &but '0' in the stale Bloom filter at time $t$\tabularnewline
		\hline
		$\Delta_0(t)$ & Number of bits that are '0' in the updated Bloom filter,\\ &but '1' in the stale Bloom filter at time $t$\tabularnewline
		\hline
		\hline
        $r_0$  & Number of caches with negative indication accessed \tabularnewline
		\hline
		$r_1$  & Number of caches with positive indication accessed \tabularnewline
		\hline
		$\costhomo$  & Cost function for the fully-homogeneous case~\eqref{eq:def_costhomo}\tabularnewline
		\hline
        $\argminCosthomoZero$ & Optimal choice of $r_0$~\eqref{eq:def_argmin_costhomo}.\tabularnewline
        \hline
        $\argminCosthomoOne$ & Optimal choice of $r_1$~\eqref{eq:def_argmin_costhomo}\tabularnewline
        \hline
        $\mrj$ & Probability of a miss in cache $j$ given its indication \tabularnewline
        \hline
		$\delta$ & Smoothness parameter of moving average ~\eqref{eq:ewma}\tabularnewline
		\hline
        \end{tabular}
        \normalsize
\end{table}

This section formally defines our system model and notations, which are summarized in Table~\ref{tbl:bf:notations}.

We consider a set $\SetCaches$ of $\NumCaches = |\SetCaches|$ {\em caches}, containing possibly overlapping sets of
items.
% \revision{The user can select which cache to access, and the cost of all caches is normalized to one and is arbitrary in heterogeneous settings. Such a model abstracts a CDN or proxy caches, and the cost may be the latency or number of hops to reach the destination. The user has a single chance to access a set of caches, and if it misses, it pays a miss penalty. E.g., if the data is not found in the proxy cache, it would be fetched normally, but the hop count or latency (represented as cost) would be greater.}
\revision{
    We denote by $[i]$ the set of integers $\set{0, \ldots, i}$.
}%
% We associate each time $t$ with a unique item request $x_t$ issued at time $t$, and we refer to the entire sequence of requests as $\sigma$.
% In what follows, we remove the subscript $t$ 
\revision{
Let $\mS_j$ denote the set of items stored at cache $j$ (at some point in time). 
Given a request for an arbitrary item $x$, we let $\Phitj$ denote the probability that $x \in \mS_j$.
This probability depends on the distribution of the requests, as well as on the cache policy.
$\Phit_j$ is commonly referred to as the {\em hit ratio}, i.e., the fraction of requests in a sequence $\sigma$ that were available in cache $j$, upon being issued.
Similarly to previous works, we assume that it is possible to produce a reasonable estimation of the hit ratio from the history~\cite{adaptive_cache,TinyLFU}.%
\footnote{We provide further details of how to obtain such an estimation in Sec.~\ref{sec:client-side-alg}.
}% end foootnote
% \revision{
Each cache $j$ maintains an {\em indicator} 
    $\ind_j$, which approximates the set of items in cache $j$.
    $\ind_j (x) = 1$ is referred to as a {\em positive indication} while $\ind_j (x) = 0$ is considered a {\em negative indication}.  

In what follows, when estimating the probabilities affecting the performance of our system, we consider these probabilities with respect to an arbitrary item $x$ being drawn from the distribution of items in the request sequence.

The {\em false-positive ratio} of $\ind_j$ is defined by 
    $\fpr_j = \Pr(\ind_j (x) = 1 | x \notin \mS_j)$.
    It captures the probability that an arbitrary request $x$ 
    that is 
    not in $\mS_j$, 
    is mistakenly marked by theh indicator as being in $\mS_j$.        
    Similarly, the {\em false-negative ratio} of $\ind_j$, defined by $\fnr_j = \Pr(\ind_j (x) = 0 | x \in \mS_j)$, is the  probability that the indicator mistakenly indicates that a request for an arbitrary $x$ is not in $\mS_j$.
    For every cache $j$, we denote by $\mrone_j = \Pr (x \notin \mS_j | \ind_j (x) = 1)$ the {\em positive exclusion probability}, that is, the probability that a requested arbitrary item $x$ is not in the cache, despite a positive indication.
    Similarly, we let $\mrzero_j = \Pr (x \notin \mS_j | \ind_j (x) = 0)$ denote the {\em negative exclusion probability}, that is, the probability that a requested arbitrary item $x$ is not in the cache, given a negative indication.

    We denote by $q_j$ the the {\em positive indication ratio}, namely, the probability of a positive indication for an arbitrary item requested from cache $j$.
    We note that if $\mronej=0$ then any access to cache $j$ upon a positive indication results in a cache hit, 
    and if $\mronej=1$ then any access to cache $j$ upon a positive indication results in a cache miss.
    We therefore assume hereafter that $0 < \mronej < 1$.

A positive indication for cache $j$ occurs when either an arbitrary requested item is in $S_j$ and no false-negative occurs; or the item is not in $S_j$ and a false-positive occurs.
}
Hence,
\begin{equation}\label{eq:Pone}
\Ponej = \Pr(\ind_{j}(x) = 1) = \Phit_{j} \cdot \left(1 - \fnrj \right) + (1 - \Phit_{j}) \cdot \fprj.
\end{equation}
Using Bayes' theorem, it follows that  
\begin{align}
\mronej  &= \Pr (x \notin \mS_j | \ind_{j} (x)= 1) = \fprj \cdot (1 - \Phitj) / \Ponej \label{eq:mrj1} \\
\mrzeroj &= \Pr (x \notin \mS_{j} | \ind_{j} (x)= 0) \notag \\
              & = \left(1 -\fprj \right) \cdot (1 - \Phitj) / (1 - \Ponej), \label{eq:mrj0}
\end{align}
for $\Ponej$ as defined in~\eqref{eq:Pone}.

We say that a system is {\em sufficiently-accurate} if for every indicator of cache $j$, $\fpr_{j} + \fnr_{j} < 1$.%
\footnote{
This definition is inspired by the notions of {\em accuracy} and {\em informedness}~\cite{powers11evaluation}.
}
We note that in most real-life scenarios, both the false-positive ratio $\fpr_{j}$, and the false-negative ratio $\fnr_{j}$, are well below 0.5, and therefore such systems are sufficiently-accurate.

The following simple condition characterizes sufficiently-accurate systems. 
\omitproofdisclaimer{Due to space constraints, some of the proofs are omitted, and available in~\cite{FN_aware_TR}.}
%   url = "https://arxiv.org/abs/2102.01724",
\begin{proposition}
\label{Prop:suffice-accurate}
A system is sufficiently-accurate iff for every $j$ it holds that $\mrzeroj > \mronej$.
\end{proposition}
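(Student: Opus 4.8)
The plan is to start from the defining inequality $\fprj + \fnrj < 1$ and the Bayes expressions for $\mronej$ and $\mrzeroj$ given in~\eqref{eq:mrj1} and~\eqref{eq:mrj0}, and show the equivalence with $\mrzeroj > \mronej$ by a direct algebraic manipulation. Since $\mronej$ and $\mrzeroj$ differ only in the factors $\fprj/\Ponej$ versus $(1-\fprj)/(1-\Ponej)$ (both share the common factor $1-\Phitj$), comparing them reduces to comparing $\fprj/\Ponej$ with $(1-\fprj)/(1-\Ponej)$. I would first dispose of the degenerate case $\Phitj \in \{0,1\}$ (if needed) or simply note that $0 < \Phitj < 1$ and $0 < \Ponej < 1$ follow from the running assumption $0 < \mronej < 1$ together with~\eqref{eq:Pone}, so that all the denominators appearing are strictly positive and cross-multiplication preserves the direction of inequalities.

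The key chain of equivalences I expect to carry out: $\mrzeroj > \mronej$ iff $\frac{(1-\fprj)(1-\Phitj)}{1-\Ponej} > \frac{\fprj(1-\Phitj)}{\Ponej}$, iff (cancelling $1-\Phitj>0$ and cross-multiplying by the positive quantities $\Ponej$ and $1-\Ponej$) $(1-\fprj)\Ponej > \fprj(1-\Ponej)$, iff $\Ponej - \fprj\Ponej > \fprj - \fprj\Ponej$, iff $\Ponej > \fprj$. It then remains to show $\Ponej > \fprj \iff \fprj + \fnrj < 1$. Substituting~\eqref{eq:Pone}, $\Ponej > \fprj$ becomes $\Phitj(1-\fnrj) + (1-\Phitj)\fprj > \fprj$, i.e. $\Phitj(1-\fnrj) > \Phitj\fprj$, and since $\Phitj > 0$ this is $1-\fnrj > \fprj$, i.e. $\fprj + \fnrj < 1$. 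Running this through ``for every $j$'' on both sides yields the proposition.

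The only real subtlety — and the step I would be most careful about — is justifying that $0 < \Phitj < 1$ and $0 < \Ponej < 1$ in the regime we care about, so that no cancellation divides by zero and no cross-multiplication flips a sign. The assumption $0 < \mronej < 1$ made just before the proposition statement is exactly what rules out the boundary cases: from~\eqref{eq:mrj1}, $\mronej > 0$ forces $\fprj > 0$ and $\Phitj < 1$, while $\mronej < 1$ together with~\eqref{eq:Pone} forces $\Ponej > 0$ hence (since $\Ponej$ is a convex combination and not everything can be $1$) also $\Ponej < 1$ and $\Phitj > 0$. Once these positivity facts are in hand, the rest is the routine algebra sketched above; I would present it as a single displayed chain of ``iff''s to keep it compact.
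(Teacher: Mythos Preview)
Your proposal is correct and follows essentially the same route as the paper's proof: reduce $\mrzeroj > \mronej$ via~\eqref{eq:mrj1}--\eqref{eq:mrj0} to $\Ponej > \fprj$, then substitute~\eqref{eq:Pone} and simplify to $\fprj + \fnrj < 1$. Your added care in justifying $0 < \Phitj < 1$ and $0 < \Ponej < 1$ from the standing assumption $0 < \mronej < 1$ is more explicit than what the paper writes, but the argument is otherwise identical.
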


\proofsketch{
    The proof of Prop.~\ref{Prop:suffice-accurate} follows from algebraic manipulation.
    Intuitively, the condition $\mrzeroj > \mronej$ implies that an item is more likely to be in the cache given a positive indication than its likelihood of being in the cache given a negative indication.
    This condition states that indications (either positive or negative) are more likely to be correct than incorrect. 
}

\fullproof{
    \begin{proof}
    Since the claim holds for any cache $j$, we omit the subscript $j$ for simplicity.
    We first note that the condition $\mrzero > \mrone$ implies that an item is more likely to be in the cache given a positive indication than its likelihood of being in the cache given a negative indication.
    Intuitively, this condition states that indications (either positive or negative) are more likely to be correct than incorrect. 
    In turn, this translates to having $\fpr < 0.5$ and $\fnr < 0.5$, which is indeed satisfied in a sufficiently-accurate system ($\fpr + \fnr < 1$).
    
    To formally prove Prop.~\ref{Prop:suffice-accurate}, we use the expressions for $\mrone$ (Eq.~\ref{eq:mrj1}) and $\mrzero$ (Eq.~\ref{eq:mrj0}), which imply that 
    $\mrzero > \mrone$
    if and only if
    \begin{equation}\label{eq:Prop:suffice-accurate}
    \frac{\left(1 - \fpr\right) \left(1 - \Phit \right)}{1 - \Pone} > 
    \frac{\fpr \left(1 - \Phit \right)}{\Pone}.
    \end{equation}
    Rearranging~\eqref{eq:Prop:suffice-accurate}, this is equivalent to having $\Pone > \fpr$. Assigning the expression for $\Pone$ (Eq.~\ref{eq:Pone}), we obtain $\Phit \left(1 - \fnr \right) + (1 - \Phit) \fpr > \fpr$, which is equivalent to $\fpr + \fnr < 1$, thus completing the proof.
    \end{proof}
}

\revision{
    Note that by our assumption that $\mronej>0$, Prop.~\ref{Prop:suffice-accurate} implies that $\mrzeroj>0$.
    For a {\em concrete} query $x$, let $\SetP$ denote the set of caches with positive indications for $x$, i.e., $\SetP = {\set{j | \ind_j(x)=1}}$, and $\numP = \abs{\SetP}$.
}
A request for such a datum $x$ triggers a {\em data access} which consists of 
\begin{inparaenum}[(i)]
\item querying for $x$ in all the $\NumCaches$ indicators,
\item selecting a subset $\subsetCaches \subseteq \SetCaches$ of caches, and \item accessing all the $|\subsetCaches|$ selected caches in parallel.
\end{inparaenum}
Accessing cache $j$ incurs some predefined {\em access cost}, $\accsCost_j$.
For ease of presentation, we assume without loss of generality that $\min_j c_j = 1$. 
% \itamar{Our homo' sim' contradicts that: we use there uniform costs of 2.}
The overall access costs of accessing a set $\subsetCaches$ of caches is $\accsCost_{\subsetCaches} = \sum_{j \in \subsetCaches} \accsCost_j$. 

A multi-cache data access is considered a {\em hit} if the item $x$ is found in at least one of the accessed caches, and a {\em miss} otherwise. 
A miss incurs a {\em miss penalty} of $\missp$, for some $\missp \geq 1$.

In our model, we do not assume any specific sharing policy among the caches. 
Yet, in the analysis of our system (Sections~\ref{sec:homo}-\ref{sec:hetero}) we assume that the exclusion probabilities are mutually independent. 
Under this assumption, our analysis provides a baseline for understanding the performance of such systems.
However, in the evaluation of our algorithms, we consider environments where the exclusion probabilities are not necessarily mutually independent (Sec.~\ref{sec:sim}).
% \itamar{Our evaluation don't use mutually independence: as we use $k loc = 1$, when an item is found in one cache, it is surely NOT found in another cache. So the exclusion probabilities are indeed correlated (either with a positive/negative correlation).}

\revision{
The {\em miss cost} of an access to a caches set $\subsetCaches$ captures the expected cost of a miss, namely, the miss penalty, times the probability of a miss. Formally, the miss cost for a query $x$ is
$\missp \cdot \prod_{\substack{j \in \subsetCaches\\ \ind_j(x) = 1}} \mrone {j} \cdot \prod_{\substack{j \in \subsetCaches\\ \ind_j(x) = 0}} \mrzero {j}$.
}
The (expected) {\em service cost} of a query is the sum of the access cost and the miss cost, namely, 
\revision{
\begin{equation}
\label{eq:def_service_cost}
\begin{split}
\costhetero_x (\subsetCaches)  = 
\sum\nolimits_{j \in \subsetCaches} \accsCost_j + \missp
\prod_{\substack{j \in \subsetCaches\\ \ind_j(x) = 1}} \mronej
\cdot \prod_{\substack{j \in \subsetCaches\\ \ind_j(x) = 0}} \mrzeroj.
\end{split}
\end{equation}
The {\em Cache Selection with False-Negative Awareness} (\cacheprobFna)
problem is to find a subset of caches $\subsetCaches \subseteq \SetCaches$ that minimizes the expected cost $\phi_x(\subsetCaches)$.
}%
\footnote{When clear from the context, we will omit the subscript $x$ from $\costhetero_x$.}

In what follows, we refer to an access to a cache with a positive indication as a {\em positive} access, and refer to an access to a cache with a negative indication as a {\em negative} access. 
In particular, we consider two types of approaches to solving the cache selection problem:
\begin{inparaenum}[(i)]
\item {\em false-negative oblivious ($\fno$)} schemes, which only perform positive accesses, and
\item {\em false-negative aware ($\fna$)} schemes, which may also perform negative accesses.
\end{inparaenum}
While the former may be viewed as the traditional way access strategies are designed, the latter is a more speculative approach, which sometimes accesses a cache even with no positive indication, risking an increased access cost.

\revision{
    We say that an algorithm Alg is an $\alpha$-approximation for the \cacheprobFna\ problem if the service cost incurred by any algorithm is at least $1/\alpha$ times the expected service cost incurred by Alg.
}

\section{Bloom Filters and Staleness}
\label{sec:bloom_filters}

This section considers Bloom filters~\cite{Bloom}, which we use as a primary example of indicators.
In particular, we will discuss how staleness affects false-negatives and false-positives in such indicators.
We note that our approach described hereafter can also be applied to other types of indicators (e.g., TinyTable~\cite{TinyTableJournal}).%
\footnote{
In general, the false-negative and false-positive ratios are strongly related to the structure of the indicator, and at times even to its specific implementation.}

% \gabi{the fact we're neglecting such false-negatives is actually not a good thing, unless we can show that they are indeed negligible. If they are, then there's no "harm" in saying that we consider BFs. If they're not, then it's a much bigger problem... In any case, we can say we're ignoring them in the evaluation section. In terms of our *methodology* and model/algorithm description, it assumes that there are no other sources of FN. I suggest we keep things as clean as possible.}
% \itamar{Gabi, we may refer to~\cite{Survey18}. In general, adding the support of items' removals (which is almost necessary in many practical cases) usually results in some false-negatives. However, the probability of false-negatives is rather low. I think that one important message of our story is that we are very careful with performing negative accesses, as we wouldn't like to do more harm than good. Hence, we prefer slightly under-estimating the false-negative ratio rather than being in higher risk of over-estimating it.}
A Bloom filter is a randomized data structure that approximately represents a set of items.
A Bloom filter $\ind$ consists of a bit array of size $\abs{\ind}$, and $k$ {\em independent} hash functions.
When adding an item to the filter, each of the $k$ hash functions is applied to the item, and the corresponding bit in the array is set.
When testing for an item's existence, we apply the $k$ hash functions and test the corresponding bits. If all bits are set, the Bloom filter replies with a positive indication. Otherwise, the indication is negative.

\revision{
Figure~\ref{fig:Bloomexample} illustrates the operation of a Bloom filter with two hash functions.
The only items inserted into the Bloom filter are X and Y.
X and Y were added to the Bloom filter by setting the hashes of their keys, colored by blue (X), and green (Y).
To query for X, one should apply the two hash functions, and test the corresponding values in the indicator.
As these two (blue) bits are set, the indication is true positive.
Similarly, a query for Y tests the two green bits, also resulting in a true positive indication. 
Consider next item Z, which was not inserted into the filter.
The two hashes of Z are depicted in orange.
As both of them collide with bits that were set by the insertion of X and Y, the indication is a {\em false positive} -- namely, a positive indication for a datum Z that was not inserted into the filter.
Finally, consider a query for the item W, which was not inserted into the Bloom filter.
One of the hashes of W collides with the hash of X, while the other does not exhibit a collision, and is therefore reset. Hence, the indication for W is a true negative.
}

\begin{figure}[th]
    \includegraphics[width=\columnwidth]{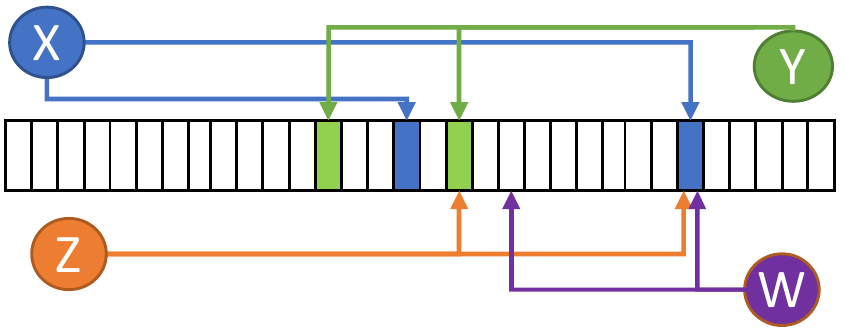}
    \caption{\label{fig:Bloomexample}
    \revision{
        An example of a Bloom filter with two hash functions. A white box represents a reset bit, while a colored box represents a set bit. 
    }
    }
\end{figure}

In a fresh Bloom filter, which is updated upon every insertion of an item to the set, positive indications may be false due to hash collisions, whereas negative indications are guaranteed to be correct.
However, in a stale Bloom filter, negative indications may also be erroneous.
Such false-negatives occur, e.g., when indicators are advertised to the clients only periodically. In such a scenario, when a new item is admitted, but the updated indicator is not yet advertised, the stale indicator available to the client fails to represent this change.

To allow a meaningful analysis of the trade-off between accuracy and memory footprint, it is useful to express the size of Bloom filters using the notion of {\em Bits Per Element (\bpe)}.
Intuitively, optimally configured Bloom filters of the same \bpe\ have the same false-positive accuracy, regardless of the size of the set being approximated by the Bloom filter.
More formally, given the value of \bpe, one can calculate the optimal number of hash functions $k$, minimizing the false-positive ratio ~\cite{Survey12}. 

In the context of caches,  let $\cacheSize_j$ denote the {\em size} of cache $j$, i.e., the maximum number of elements that can be stored in cache $j$.
The  size of the bloom filter indicator $\ind_j$ associated with cache $j$ is therefore $\abs{\ind_j} = \mbpe \cdot \cacheSize_j$.
Each cache manages its own Bloom filter, and occasionally advertises the indicator to the client.
At any time $t$, we consider the {\em updated} Bloom filter maintained by the cache, and let $B_1(t)$ and $B_0(t)$ denote the number of bits set (i.e., with value 1) and reset (i.e., with value 0), at time $t$, respectively, in the Bloom filter approximating the content of the cache.
A client uses a replica of the cache's Bloom filter, representing a snapshot of the Bloom filter that the cache advertised at some time $t' \leq t$.
We refer to this replica as the {\em stale} Bloom filter.
Let $\Delta_1(t)$ denote the number of bits that are set in the updated Bloom filter but are reset in the stale Bloom filter.
Similarly, we let $\Delta_0(t)$ denote the number of bits that are reset in the updated Bloom filter, but are set in the stale Bloom filter. Fig.~\ref{fig:calc_fnr_toy_example} illustrates this situation.
For clarity of presentation, and
\revision{without loss of generality,}
we group all the bits contributing to $\Delta_0(t)$, and group the bits accounted by $\Delta_1(t)$.

\begin{figure}
 	\footnotesize
 	\centering
    \begin{tabular}{c | c | c | c | c |}
	\cline{2-5}
	 & 
	 \multicolumn{2}{c|}{$B_0(t)$} 
	 & 
	 \multicolumn{2}{c|}{$B_1(t)$} \\
	\cline{2-5}
 	 updated BF& $0 \dots 0$ & $0 \dots \dots \dots 0$ & 1 \dots 1 & $1 \dots 1$ \\
	\cline{1-5}
 	 stale BF& $1 \dots 1$ & $0 \dots \dots \dots 0$ & 0 \dots 0 & $1 \dots 1$ \\
	\cline{2-5}
	\rule{0pt}{2ex}    
 	 & $ \Delta_0(t)$ & $B_0(t) - \Delta_0(t)$ & $\Delta_1(t)$ & $B_1(t) - \Delta_1(t)$ \\
	\cline{2-5}
    \end{tabular}
	\normalsize
	\caption {An example of an updated and a stale Bloom filter at time $t$.}
	\label{fig:calc_fnr_toy_example}
\end{figure}

\paragraph*{The false-negative ratio}
Consider a query for an item $x$ that is stored in the cache at time $t$.
Recall that an updated Bloom filter never exhibits false-negatives.
Hence we know that all the $k$ hashes of $x$ are mapped to the bits that are set in the updated Bloom filter.
The query for $x$ in a stale indicator is a true positive iff all the hashes are mapped
to the set of $B_1(t) - \Delta_1(t)$ bits that are also set in the stale indicator;
by the fact that the $k$ hash functions are independent, and uniformly distributed over their range, 
this happens with probability $\left[ \frac{B_1(t) - \Delta_1(t)}{B_1(t)} \right]^k$.
Otherwise, the query for $x$ is a false-negative.
It follows that the false-negative ratio of the cache at time $t$ can be estimated by
\begin{equation}\label{eq:estimate_fnr}
\fnr_t = 1 - \left[ \frac{B_1(t) - \Delta_1(t)}{B_1(t)} \right]^k. 
\end{equation}

\paragraph*{The false-positive ratio}
Consider a query for an item $y$ that is not stored in the cache at time $t$.
% \gil{again - perhaps add a notation for that?} 
For uniformly distributed and independent hash functions, 
the hashes of $y$ are mapped to arbitrary locations in the Bloom filter.
The stale Bloom filter exhibits a false-positive iff all the $k$ hashes of $y$ map to bits that are set in the stale Bloom filter.
Hence, the probability of a false-positive in the stale indicator at time $t$ can be estimated by:
\begin{equation}\label{eq:estimate_fpr}
\fpr_t = \left[ \frac{B_1(t) - \Delta_1(t) + \Delta_0(t)}{|I_j|}\right]^k. 
\end{equation}

% \itamar{The mistake in Thm. in~\cite{fpr_fnr_in_dist_replicas} is that they ignore the fact that $x$ is known to be in $\mS$, and therefore its hashes all fall within $B_1$. Hence, for the conditional probability $\Pr (\ind_x = 0 | x \in \mS)$ we should divide both the prob' that the hash falls (in the stale indicator) within a reset, and set, bit, by $P1(n)$.}

We note that ~\eqref{eq:estimate_fnr} and~\eqref{eq:estimate_fpr} are only estimations, while the exact miss probabilities strongly depend upon the workload, and the cache policy.
For instance, consider the case where immediately after cache $j$ sends an update, it caches an item $x$.
Then, any subsequent requests for $x$, until the cache advertises the next update, are false-negatives.
However, until the cache advertises the next updated indicator, $x$ may be accessed many times, or not accessed at all, according to the concrete workload.

To conclude this section, we note that one may apply the above analysis to more complex filters. One such example, which we use in Sec.~\ref{sec:sim}, is a compressed {\em Counting Bloom Filter} (CBF)~\cite{CBF}, which behaves similarly to a simple BF, while also supporting the removal of objects.

% The analysis above assumes that the indicator is a simple Bloom filter~\cite{Bloom}.
% % hence ~\eqref{eq:estimate_fpr} and~\eqref{eq:estimate_fnr} can serve as estimates of the false-positive and false-negative ratios. 
% However, one may apply a similar technique of estimating the false-positive ratio, and the false-negative ratio, also to other indicators (e.g., TinyTable~\cite{TinyTableJournal}).
% % For instance, consider a TinyTable~\cite{TinyTableJournal}, which is finger-prints-based indicator. One may apply a similar combinatorial analysis to estimate the false indications caused due to colliding finger-prints in the stale TinyTable. 

% For instance, one may use a similar combinatorial analysis to estimate the false indications caused due to colliding hashed finger-prints, which may occur in a stale used in TinyTable~\cite{TinyTableJournal}. 
% is based upon finger-print hashes. One may analyse the false-positive ratio, and the false-negative ratio by a combinatorial analysis of the length of the finger-prints, and the differences between the updated TinyTable, and the stale TinyTable.

\section{The Fully Homogeneous Case}
\label{sec:homo}

In this section, we focus on a simplified fully-homogeneous case.
In such settings, the access cost of all caches is the same, and is normalized to one ($c=1$). The per-cache hit ratio, false-positive ratio, and false-negative ratio, are identical for all caches. I.e., for each cache $j$, $\Phitj = \Phit$, $\fprj
= \fpr$ and $\fnrj = \fnr$ for some constants $\Phit, \fpr, \fnr \in [0,1]$. 

We use this homogeneous setting to
explore the challenges and potential benefits arising from developing a false-negative aware cache selection strategy.
We first describe the aspects specific to such homogeneous settings, and then describe and analyze our false-negative aware {\em Homogeneous Cache Selection} policy, \ecmfna.
Our analysis shows that \ecmfna\ minimizes the service cost in the fully-homogeneous case.
Later on, we use \ecmfna\ to derive insights as to when it is beneficial to access a cache despite a negative indication.

\nonjrnlonly{\textbf{Preliminaries:}}
\jrnlonly{\subsection{Preliminaries}\label{sec:homo:preliminaries}}
In fully-homogeneous settings,
\revision{
all the caches with positive indication have the same access cost, and positive exclusion probability. Hence, once one decides on {\em how many} caches with positive indication to access, any choice of the concrete identities of the selected caches incurs the same expected cost. This also holds for the caches with negative indication. The task of selecting a subset of the caches $D \subseteq \SetCaches$ that minimizes the service cost is thus reduced to selecting two integers:
$r_1 \in [\numP]$,
the number of caches with positive indication to access; and
$r_0 \in [\NumCaches - \numP]$,
the number of caches with negative indication to access. The objective function $\costhetero$ ~\eqref{eq:def_service_cost} is reduced to 
\begin{equation}\label{eq:def_costhomo}
\costhomo (r_0, r_1) = 
r_0 + r_1 + \missp \cdot \mrzero^{r_0} \cdot \mrone^{r_1}.
\end{equation}

We let $\argminCosthomoZero$ and $\argminCosthomoOne$ denote the values of $r_0$ and $r_1$ that minimize the service cost, namely

\begin{equation}\label{eq:def_argmin_costhomo}
\costhomo (\argminCosthomoZero, \argminCosthomoOne) = \min_{ \substack{r_1 \in [\numP] \\ r_0 \in [\NumCaches - \numP]}} \costhomo (r_0, r_1).
\end{equation}
We assume without loss of generality that $\argminCosthomoOne$ is the maximal $r_1 \in [\numP]$ satisfying~\eqref{eq:def_argmin_costhomo}, and that $\argminCosthomoZero$ is the maximal $r_0 \in [\NumCaches - \numP]$ subject to  $\argminCosthomoOne$.
} % revision

\vspace{0.2cm}

\subsection{An Optimal Strategy}\label{sec:alg_best_practice}
Our algorithm for the fully-homogeneous settings, \ecmfna, is defined in Algorithm~\ref{alg:best_practice}. 
The algorithm first calculates the number of caches with positive indication to access, $\argminCosthomoOne$, assuming no cache with negative indication is accessed (Line~\ref{alg:best_practice:line_k1}). 
Next, if the
expected miss cost is still higher than accessing an additional cache
(the condition in Line~\ref{alg:best_practice:if}), the algorithm also considers caches with negative indications (Line~\ref{alg:best_practice:line_k0}).

\begin{algorithm}[t!]
    \revision{
    	\caption{\ecmfna} \label{alg:best_practice}
    	\begin{algorithmic}[1]
    		\State $\tilde{r}_0 = 0;  \tilde{r}_1 =
    		\max 
    		\argmin_{r_1 \in { [\numP] } } \left[r_1 + \missp \cdot \mrone^{r_1}\right]$\label{alg:best_practice:line_k1}
    		\Statex
    		\Comment{$\tilde{r}_1$ is the maximum value obtaining the minimum}
    		\If {$\missp \cdot \mrone^{\tilde{r}_1} > 1$}
    		\label{alg:best_practice:if}
    		\State $\tilde{r}_0 = \max \argmin_{r_0 \in  [\NumCaches - \numP]} \left[r_0 + \tilde{r}_1 + \missp \cdot \mrone^{\tilde{r}_1}\mrzero^{r_0}\right]$\label{alg:best_practice:line_k0}
    		\Statex
    		\Comment{$\tilde{r}_0$ is the maximum value obtaining the minimum subject to $\tilde{r}_1$}
    		\EndIf
    		\State \Return {$\tilde{r}_0, \tilde{r}_1$}
    	\end{algorithmic}
    }%
\end{algorithm}

\revision{
The following lemma characterizes an optimal solution in the case where not all caches with positive indications are accessed.

\begin{lemma}\label{lemma:if_r1_star_leq_n_p}
If  a  fully-homogeneous  system  is  sufficiently-accurate and $\argminCosthomoOne < \numP$, then $\argminCosthomoZero=0$.
\end{lemma}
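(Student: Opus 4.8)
The plan is to use a simple exchange argument. I will show that if a joint minimizer of $\costhomo$ both accesses some cache with a negative indication and leaves some cache with a positive indication unaccessed, then swapping one negative access for one positive access strictly lowers the service cost, contradicting optimality. The hypothesis $\argminCosthomoOne < \numP$ guarantees there is a positively-indicated cache available to ``swap into.''

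Concretely: assume the system is sufficiently-accurate and $\argminCosthomoOne < \numP$, and suppose toward a contradiction that $\argminCosthomoZero \geq 1$. Then $\argminCosthomoZero - 1 \in [\NumCaches - \numP]$ and $\argminCosthomoOne + 1 \in [\numP]$, so the pair $(\argminCosthomoZero - 1,\, \argminCosthomoOne + 1)$ is feasible in~\eqref{eq:def_argmin_costhomo}. I would then substitute into the cost expression~\eqref{eq:def_costhomo}. The linear part $r_0 + r_1$ is invariant under this swap, so the entire change comes from the miss-cost term:
\[
\costhomo(\argminCosthomoZero - 1,\, \argminCosthomoOne + 1) - \costhomo(\argminCosthomoZero, \argminCosthomoOne) = \missp \cdot \mrzero^{\,\argminCosthomoZero - 1} \cdot \mrone^{\,\argminCosthomoOne} \cdot (\mrone - \mrzero).
\]

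The key input is Proposition~\ref{Prop:suffice-accurate}: since the system is sufficiently-accurate, $\mrzero > \mrone$, and since we assume $0 < \mrone$, Proposition~\ref{Prop:suffice-accurate} also yields $\mrzero > 0$. With $\missp \geq 1$ and $0 < \mrone < \mrzero$, both powers $\mrzero^{\,\argminCosthomoZero - 1}$ and $\mrone^{\,\argminCosthomoOne}$ are strictly positive (nonnegative integer exponents, positive bases), so every factor on the right-hand side is positive except $(\mrone - \mrzero) < 0$; hence the difference is strictly negative. Thus $(\argminCosthomoZero - 1,\, \argminCosthomoOne + 1)$ attains a strictly smaller value of $\costhomo$ than the minimizer $(\argminCosthomoZero, \argminCosthomoOne)$, contradicting~\eqref{eq:def_argmin_costhomo}. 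Therefore $\argminCosthomoZero = 0$.

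I do not anticipate a real obstacle here: the computation is a one-line difference once the right swap is identified, and the only nontrivial ingredient is the characterization ``sufficiently-accurate $\iff \mrzero > \mrone$'' already established in Proposition~\ref{Prop:suffice-accurate}. The single point requiring a line of care is the positivity of the two powers, which follows from $0 < \mrone < \mrzero$.
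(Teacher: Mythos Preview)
Your proposal is correct and is essentially identical to the paper's own proof: both use the same exchange argument, swapping one negative access for one positive access via the feasible pair $(\argminCosthomoZero-1,\argminCosthomoOne+1)$, computing the same cost difference $\missp\,\mrzero^{\argminCosthomoZero-1}\mrone^{\argminCosthomoOne}(\mrone-\mrzero)$ (up to sign convention), and invoking Proposition~\ref{Prop:suffice-accurate} for the strict inequality $\mrzero>\mrone$. Your explicit remark on the positivity of the power factors is, if anything, slightly more careful than the paper's write-up.
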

\begin{proof}
Assume that $\argminCosthomoOne < \numP$, and assume by contradiction that $\argminCosthomoZero>0$. It follows that 
%and consider the cost of the solution $r_0=\argminCosthomoZero-1, r_1=\argminCosthomoOne+1$.
\begin{align}
\label{eq:lemma:if_r1_star_leq_n_p}
& \costhomo (\argminCosthomoZero, \argminCosthomoOne) - \costhomo (\argminCosthomoZero-1, \argminCosthomoOne+1) = \notag \\ 
& \argminCosthomoZero + \argminCosthomoOne + \missp \mrzero^{\argminCosthomoZero} \mrone^{\argminCosthomoOne} -
\left[\argminCosthomoZero-1 + \argminCosthomoOne+1 + \missp \mrzero^{\argminCosthomoZero-1} \mrone^{\argminCosthomoOne+1} \right] = \notag \\ 
& \missp \mrzero^{\argminCosthomoZero-1} \mrone^{\argminCosthomoOne} (\mrzero - \mrone) > 0,
\end{align}
where the last inequality holds true by Proposition~\ref{Prop:suffice-accurate} (recall that the system is sufficiently-accurate).
Eq.~\ref{eq:lemma:if_r1_star_leq_n_p} implies that $\costhomo (\argminCosthomoZero, \argminCosthomoOne) > \costhomo (\argminCosthomoZero-1, \argminCosthomoOne+1)$, thus contradicting the definitions of $\argminCosthomoZero$ and  $\argminCosthomoOne$~\eqref{eq:def_argmin_costhomo}.
\end{proof}
}

The following theorem shows that \ecmfna\ is optimal in the fully-homogeneous case.%

\begin{theorem}
\label{thm:ecm_optimality}
If a fully-homogeneous system is sufficiently-accurate, then \ecmfna\ minimizes the service cost. 
\end{theorem}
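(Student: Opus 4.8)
The plan is to show that the pair $(\tilde r_0,\tilde r_1)$ returned by \ecmfna\ coincides with an optimal pair $(\argminCosthomoZero,\argminCosthomoOne)$ minimizing $\costhomo$ over the rectangle $r_1\in[\numP]$, $r_0\in[\NumCaches-\numP]$. Because $\costhomo(r_0,r_1)=r_0+r_1+\missp\cdot\mrzero^{r_0}\mrone^{r_1}$ separates multiplicatively in the penalty term, I would first argue that the minimization decouples in a controlled way, and then split into the two cases that the algorithm itself distinguishes.

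First I would handle the case $\argminCosthomoOne<\numP$. By Lemma~\ref{lemma:if_r1_star_leq_n_p}, here $\argminCosthomoZero=0$, so the optimum is $\min_{r_1\in[\numP]}\bigl[r_1+\missp\mrone^{r_1}\bigr]$, which is exactly what Line~\ref{alg:best_practice:line_k1} computes as $\tilde r_1$ (taking the maximal minimizer, consistent with the tie-breaking convention for $\argminCosthomoOne$). It remains to check that in this case the algorithm does not spuriously enter the \texttt{if} block in a way that changes the answer: either the condition $\missp\mrone^{\tilde r_1}>1$ fails and we are done, or it holds but then Line~\ref{alg:best_practice:line_k0} minimizes $r_0\mapsto r_0+\tilde r_1+\missp\mrone^{\tilde r_1}\mrzero^{r_0}$ over $r_0\in[\NumCaches-\numP]$; since $\mrzero>\mrone$ is still a genuine probability in $(0,1)$, I would show the optimal $r_0$ here is $0$ whenever $\argminCosthomoZero=0$ — but actually the cleaner route is: Lemma~\ref{lemma:if_r1_star_leq_n_p} says the true optimum has $r_0=0$, and the algorithm's choice of $\tilde r_1$ already equals $\argminCosthomoOne$, so whatever $\tilde r_0$ the \texttt{if} block picks, optimality of $(0,\tilde r_1)$ forces the \texttt{if}-block minimizer to be $\tilde r_0=0$ as well (the added $r_0$ term can only hurt). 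So the returned pair is optimal.

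Next, the case $\argminCosthomoOne=\numP$: all caches with positive indication are accessed. Then the problem reduces to choosing $r_0$ to minimize $r_0+\numP+\missp\mrone^{\numP}\mrzero^{r_0}$ over $r_0\in[\NumCaches-\numP]$. I claim the algorithm reproduces this. The subtlety is that Line~\ref{alg:best_practice:line_k1} might return $\tilde r_1<\numP$; I would rule this out by showing that if $\argminCosthomoOne=\numP$ then in fact $\numP$ is already the (maximal) minimizer of $r_1+\missp\mrone^{r_1}$ alone — this needs a short monotonicity/convexity argument on the univariate function $g(r_1)=r_1+\missp\mrone^{r_1}$, which is convex in $r_1$ (its second difference is $\missp\mrone^{r_1}(\mrone-1)^2>0$), so its set of minimizers over the integer interval is contiguous, and I would argue the joint optimum pushing $r_1$ up to $\numP$ cannot happen unless $g$ is already nonincreasing up to $\numP$. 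Once $\tilde r_1=\numP=\argminCosthomoOne$, the \texttt{if} condition $\missp\mrone^{\numP}>1$ exactly captures whether accessing an additional (negative) cache can reduce cost — if it fails, $\tilde r_0=0$ is optimal since each extra negative access costs $1$ and saves at most $\missp\mrone^{\numP}(1-\mrzero)<1$; if it holds, Line~\ref{alg:best_practice:line_k0} directly minimizes the correct univariate expression, giving $\tilde r_0=\argminCosthomoZero$.

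The main obstacle I anticipate is the bookkeeping around the tie-breaking conventions and, more substantively, justifying that Line~\ref{alg:best_practice:line_k1} — which minimizes only the positive-access part — yields the globally correct $r_1$ even though in principle accessing negative caches could shift the optimal $r_1$. The key insight making this work is that the penalty term factors as $\mrzero^{r_0}\mrone^{r_1}$ and $\mrzero>\mrone$ (Proposition~\ref{Prop:suffice-accurate}): this monotone-exchange argument (trade a negative access for a positive one, as in the proof of Lemma~\ref{lemma:if_r1_star_leq_n_p}) shows one should always saturate positive accesses before using any negative access, so the greedy two-stage structure of the algorithm is valid. I would make this exchange argument the technical centerpiece and then let the two cases fall out.
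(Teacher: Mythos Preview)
Your proposal is correct and follows the same two-case skeleton as the paper: split on whether $\argminCosthomoOne<\numP$ (and invoke Lemma~\ref{lemma:if_r1_star_leq_n_p}) or $\argminCosthomoOne=\numP$, then verify that Line~\ref{alg:best_practice:line_k1} and Line~\ref{alg:best_practice:line_k0} reproduce $\argminCosthomoOne$ and $\argminCosthomoZero$ respectively.

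The one substantive difference is how you handle the key step in the case $\argminCosthomoOne=\numP$, namely showing that $\tilde r_1=\numP$. The paper does this by a direct algebraic bound: from $\costhomo(\argminCosthomoZero,\numP)\le\costhomo(\argminCosthomoZero,r_1)$ it rearranges and uses $\mrzero^{\argminCosthomoZero}\le 1$ to strip the $\mrzero$ factor, yielding $\numP+\missp\mrone^{\numP}\le r_1+\missp\mrone^{r_1}$ directly. Your route via discrete convexity of $g(r_1)=r_1+\missp\mrone^{r_1}$ also works, but to complete it you need the extra observation that the first difference of $r_1\mapsto r_1+\missp\mrzero^{\argminCosthomoZero}\mrone^{r_1}$ dominates that of $g$ (since $\missp\mrzero^{\argminCosthomoZero}\le\missp$), so nonincrease of the former on $[0,\numP]$ forces nonincrease of the latter. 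The paper's algebraic shortcut is more concise; your convexity framing is arguably more conceptual and would generalize better if the cost functions were perturbed. You are also more careful than the paper about explicitly treating the branch where the \texttt{if} condition fails, which the paper leaves implicit.
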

\revision{
\begin{proof}
First, consider the case where $\argminCosthomoOne < \numP$. 
By Lemma~\ref{lemma:if_r1_star_leq_n_p}, if $\argminCosthomoOne < \numP$, then
$\argminCosthomoZero=0$, and therefore 
\begin{align}
\costhomo (\argminCosthomoZero, \argminCosthomoOne) = \costhomo(0, \argminCosthomoOne)=  \argmin_{r_1 \in [\numP]} \left[r_1 + \missp \cdot \mrone^{r_1}\right]. \notag
\end{align} 
Recall that $\argminCosthomoOne$ is the maximal $r_1 \in [\numP]$ minimizing~\eqref{eq:def_argmin_costhomo}. On the other hand, in  Line~\ref{alg:best_practice:line_k1} \ecmfna\ assigns $\tilde{r}_1$ to the maximal
$r_1 \in [\numP]$
that minimizes $r_1 + \missp \mrone^{r_1}$.
% By the maximality of $\argminCosthomoOne$ and $\tilde{r}_1$, 
It follows that  $\argminCosthomoOne = \tilde{r}_1$.
Hence, we can substitute $\tilde{r}_1$ by $\argminCosthomoOne$ in Line~\ref{alg:best_practice:line_k0} of the algorithm, thus obtaining
\begin{align}\label{eq:thm:ecm_optimality:r0_tilde_eq_argmincosthomoZero}
\tilde{r}_0 = \max \argmin_{r_0 \in  [\NumCaches - \numP]} \left[r_0 + \argminCosthomoOne + \missp \cdot \mrone^{\argminCosthomoOne}\mrzero^{r_0}\right].
\end{align}
Since both $\tilde{r}_0$ and $\argminCosthomoZero$ are the maximal values minimizing~\eqref{eq:thm:ecm_optimality:r0_tilde_eq_argmincosthomoZero} (subject to $\argminCosthomoZero$), it follows that 
$\tilde{r}_0 = \argminCosthomoZero$, which implies that $\tilde{r}_0, \tilde{r}_1$ minimize $\costhomo$. 

Next, assume that $\argminCosthomoOne=\numP$. Then, for every integer $r_1 \in [\numP]$, we have $\costhomo(\argminCosthomoZero, \numP) \leq \costhomo(\argminCosthomoZero, r_1)$.
Assigning this in the definition of $\costhomo$~\eqref{eq:def_costhomo}, we have
\begin{align}\label{eq:costhomo_nx_leq_costhomo_r_1}
\argminCosthomoZero + \numP + \missp \mrzero^{\argminCosthomoZero} \mrone^{\numP} \leq 
\argminCosthomoZero + r_1 + \missp \mrzero^{\argminCosthomoZero} \mrone^{r_1}.
\end{align}
Rearranging~\eqref{eq:costhomo_nx_leq_costhomo_r_1}, we obtain
\begin{align}\label{eq:costhomo_nx_leq_costhomo_r_1_rear}
\numP - r_1 \leq \missp \mrzero^{\argminCosthomoZero} \mrone^{r_1} (1 - \mrone^{\numP-r_1}) \leq
\missp \mrone^{r_1} (1 - \mrone^{\numP-r_1}),
\end{align}
where the last inequality holds true because $0 < \mrzero \leq 1$ and $\argminCosthomoZero \geq 0$. 
Reformulating~\eqref{eq:costhomo_nx_leq_costhomo_r_1_rear}, we obtain
\begin{align}\label{thm:ecmfna:argmincosthomoOne_eq_Nx}
\numP + \missp \mrone^{\numP} \leq r_1 + \missp \mrone^{r_1}.    
\end{align}
By~\eqref{thm:ecmfna:argmincosthomoOne_eq_Nx} and the maximiality of $\tilde{r}_1$ in  Line~\ref{alg:best_practice:line_k1} of  Algorithm~\ref{alg:best_practice}, \ecmfna\ assigns $\tilde{r}_1 = \numP = \argminCosthomoOne$. 
Similarly to the case where $\argminCosthomoOne<\numP$, $\tilde{r}_0$ is set to $\argminCosthomoZero$ (using the same argument as in Eq.~\ref{eq:thm:ecm_optimality:r0_tilde_eq_argmincosthomoZero}), thus completing the proof.
\end{proof}
}

\jrnlonly{
    \subsection{Quantifying the Benefits of False-Negative-Awareness} \label{Sec:homo:illust}
    We now study the potential benefits of false-negative-awareness. 
    We do so by calculating the expected service cost of the false-negative-aware approach and the cost of the traditional,  false-negative-oblivious approaches. To make our quantitative comparison meaningful, we will also be considering a {\em Perfect Indicator}, namely, an indicator with $\fpr = \fnr = 0$. 
    
    \revision{
    The expected service cost of any indicator-based access strategy depends upon the number of positive indications $\numP$.
    In this section, we consider this value as a random variable, whose distribution depends on the distribution of the items being requested.
    We, therefore, start by calculating the distribution of $\numP$.
    }
    We interpret each positive indication as the outcome of an independent Bernoulli trial with a success probability of $\Pone$ 
    (the probability of a positive indication).
    Hence, $\numP$ follows the Binomial distribution: 

    \begin{equation}
    \label{eq:dist_of_n_p}
    \Pr (\numP =j) = \binom{\NumCaches}{j} \Pone^j (1- \Pone)^{\NumCaches - j},
    \end{equation}
    and the service cost is
    \begin{equation}
    \label{eq:tciif} \costhomo = \sum_{j = 0}^{\NumCaches} \left[ \binom{\NumCaches}{j}
    \Pone^j (1- \Pone)^{\NumCaches - j} \costhomo \left (\argminCosthomoZero(j), \argminCosthomoOne(j)\right) \right],
    \end{equation}
    where $\argminCosthomoZero(j)$ and $\argminCosthomoOne(j)$ are calculated by \ecmfna.
    
    We now calculate the concrete value of~\eqref{eq:tciif} for the special case of a {\em  perfect indicator} (\pif), namely, an indicator that has neither false-positives, nor false-negatives. 
    As the \pif\ has no false indications, expected cost minimization is obtained by accessing either a 
    single cache, when there exists at least one positive indication; or no cache, otherwise. The probability of the latter case is $\Pmiss^\NumCaches$, while the probability of the first case is $1 - \Pmiss^\NumCaches$. Hence,
    \begin{equation}
    \label{eq:tcpif}
    \tcpif = 1 - \Pmiss^\NumCaches  + 
    \missp \cdot 
    \Pmiss^\NumCaches = 1 + (\missp - 1) \Pmiss^\NumCaches
    \end{equation}
    
    Note that~\eqref{eq:tcpif} can also be derived as a special case of~\eqref{eq:tciif} where $\fpr = \fnr = 0$.

    We consider two flavors of running \ecmfna, which differ by the way they take false-negative indications into account:
    \begin{inparaenum}[(i)]
    \item \ecmfna\ as defined in Algorithm~\ref{alg:best_practice}, and
    \item A traditional algorithm (\ecmfno) that essentially only performs the first line of Algorithm~\ref{alg:best_practice}, thus never accessing caches with negative indications.
    \end{inparaenum}
    We note that \ecmfno\ employs the standard approach used in caching systems that use indicators~\cite{summary_cache, Digest, Accs_Strategies_ToN}.
    
    We now quantify the potential benefits of our approach using some numerical examples. Our evaluation here is based solely on the cost analysis and equations. Namely, it does not assume any specific dataset or cache policy. In particular, our illustration provides further insight, which complements our results presented in Section~\ref{sec:alg_best_practice}.

    We consider a system with three caches and miss penalty $\missp = 100$ and focus our attention on the {\em  service costs} of the various policies, normalized by the service cost of the Perfect Indicator (\pif) configuration.
    The costs of the policies are calculated using~\eqref{eq:tciif} and~\eqref{eq:tcpif}.
        \begin{figure}
    	\centering
    
    \pgfplotsset{scaled x ticks=false}
    \pgfplotsset{scaled y ticks=false}
    
    \newcommand{\metamin}{1}
    \newcommand{\metamax}{1.17}
    \newcommand{\NumSmpls}{(\metamax-\metamin)*100 + 1} % Num of distinct colors in the colormap. 
    \newcommand{\heatmapfontsize}{\scriptsize}
    \newcommand{\heatmapwidth}{0.5*\columnwidth}
    
    \begin{tikzpicture}
    \begin{groupplot}[
    group style = {
        group size = 2 by 1,
        horizontal sep = 10pt,
    },
    view = {0}{90},
    xtick = {0, 0.005, 0.01, 0.015, 0.02, 0.025, 0.03, 0.035, 0.04},
    ytick = {0, 0.005, 0.01, 0.015, 0.02, 0.025, 0.03, 0.035, 0.04},
    xticklabels ={0,, 0.01,, 0.02,, 0.03,, 0.04},
    yticklabels ={0,, 0.01,, 0.02,, 0.03,, 0.04},
    ylabel={False Negative Ratio ($\fnr$)},
    ylabel near ticks,
    xlabel={False Positive Ratio ($\fpr$)},
    xlabel near ticks,
    /pgfplots/colormap={blackwhite}{    color(0cm)=(blue); color(1cm)=(yellow); color(2cm)=(orange); color(3cm)=(red)},
    ]
    
    \nextgroupplot[
    font = \heatmapfontsize,
    title= \ecmfna,
    colorbar right,
    colorbar sampled, 
    width=\heatmapwidth,
    colorbar style = {
        view={0}{90},
        samples = \NumSmpls, %Num of distinct colors in the colormap 
        at={(6.2cm,0cm)},
        anchor=south,
        width=2mm,
        point meta min=\metamin,  
        point meta max=\metamax, 
        % ytick={0.5,0.51, 0.52,0.53,0.54, 0.55, 0.56, 0.57, 0.58, 0.59, 0.6, 0.61, 0.62,0.63,0.64,0.65,0.66,0.67,0.68,0.69,0.70}, % add specific ticks
        % yticklabels={0.5,,,,,0.55,,,,,0.6,,,,,0.65,,,,,0.7}, % Write labels only near these ticks
        font = \heatmapfontsize,
        ylabel = Normalized Service Cost,
    }
    ]
    \addplot3[
    surf,
    point meta min=\metamin,
    point meta max=\metamax,
    %                            shader=faceted,
    ]
              table[header=false] {

        0.0000 0.0000 1.0000
        0.0050 0.0000 1.0188
        0.0100 0.0000 1.0376
        0.0150 0.0000 1.0388
        0.0200 0.0000 1.0394
        0.0250 0.0000 1.0401
        0.0300 0.0000 1.0407
        0.0350 0.0000 1.0414
        0.0400 0.0000 1.0422
        0.0450 0.0000 1.0430
        
        0.0000 0.0050 1.0139
        0.0050 0.0050 1.0327
        0.0100 0.0050 1.0514
        0.0150 0.0050 1.0525
        0.0200 0.0050 1.0531
        0.0250 0.0050 1.0537
        0.0300 0.0050 1.0544
        0.0350 0.0050 1.0551
        0.0400 0.0050 1.0559
        0.0450 0.0050 1.0567
        
        0.0000 0.0100 1.0280
        0.0050 0.0100 1.0468
        0.0100 0.0100 1.0654
        0.0150 0.0100 1.0663
        0.0200 0.0100 1.0668
        0.0250 0.0100 1.0674
        0.0300 0.0100 1.0679
        0.0350 0.0100 1.0685
        0.0400 0.0100 1.0692
        0.0450 0.0100 1.0698
        
        0.0000 0.0150 1.0289
        0.0050 0.0150 1.0476
        0.0100 0.0150 1.0662
        0.0150 0.0150 1.0669
        0.0200 0.0150 1.0675
        0.0250 0.0150 1.0681
        0.0300 0.0150 1.0688
        0.0350 0.0150 1.0695
        0.0400 0.0150 1.0703
        0.0450 0.0150 1.0711
        
        0.0000 0.0200 1.0292
        0.0050 0.0200 1.0479
        0.0100 0.0200 1.0666
        0.0150 0.0200 1.0673
        0.0200 0.0200 1.0680
        0.0250 0.0200 1.0688
        0.0300 0.0200 1.0696
        0.0350 0.0200 1.0705
        0.0400 0.0200 1.0713
        0.0450 0.0200 1.0723
        
        0.0000 0.0250 1.0295
        0.0050 0.0250 1.0483
        0.0100 0.0250 1.0669
        0.0150 0.0250 1.0677
        0.0200 0.0250 1.0686
        0.0250 0.0250 1.0695
        0.0300 0.0250 1.0704
        0.0350 0.0250 1.0714
        0.0400 0.0250 1.0725
        0.0450 0.0250 1.0735
        
        0.0000 0.0300 1.0298
        0.0050 0.0300 1.0487
        0.0100 0.0300 1.0672
        0.0150 0.0300 1.0682
        0.0200 0.0300 1.0692
        0.0250 0.0300 1.0702
        0.0300 0.0300 1.0713
        0.0350 0.0300 1.0724
        0.0400 0.0300 1.0736
        0.0450 0.0300 1.0748
        
        0.0000 0.0350 1.0301
        0.0050 0.0350 1.0491
        0.0100 0.0350 1.0675
        0.0150 0.0350 1.0686
        0.0200 0.0350 1.0697
        0.0250 0.0350 1.0709
        0.0300 0.0350 1.0721
        0.0350 0.0350 1.0734
        0.0400 0.0350 1.0747
        0.0450 0.0350 1.0760
        
        0.0000 0.0400 1.0304
        0.0050 0.0400 1.0495
        0.0100 0.0400 1.0678
        0.0150 0.0400 1.0691
        0.0200 0.0400 1.0703
        0.0250 0.0400 1.0716
        0.0300 0.0400 1.0730
        0.0350 0.0400 1.0744
        0.0400 0.0400 1.0758
        0.0450 0.0400 1.0773
        
        0.0000 0.0450 1.0307
        0.0050 0.0450 1.0499
        0.0100 0.0450 1.0681
        0.0150 0.0450 1.0695
        0.0200 0.0450 1.0709
        0.0250 0.0450 1.0724
        0.0300 0.0450 1.0739
        0.0350 0.0450 1.0754
        0.0400 0.0450 1.0770
        0.0450 0.0450 1.0786

                };
    \nextgroupplot [
        font = \heatmapfontsize,
        title= \ecmfno,
        yticklabels=\empty,
        ylabel=\empty,
        width=\heatmapwidth]
    \addplot3[
    surf,
    point meta min=\metamin,
    point meta max=\metamax,
    ]
    table[header=false] {
    
        0.0000 0.0000 1.0000
        0.0050 0.0000 1.0188
        0.0100 0.0000 1.0376
        0.0150 0.0000 1.0388
        0.0200 0.0000 1.0394
        0.0250 0.0000 1.0401
        0.0300 0.0000 1.0407
        0.0350 0.0000 1.0414
        0.0400 0.0000 1.0422
        0.0450 0.0000 1.0430
        
        0.0000 0.0050 1.0139
        0.0050 0.0050 1.0327
        0.0100 0.0050 1.0514
        0.0150 0.0050 1.0525
        0.0200 0.0050 1.0531
        0.0250 0.0050 1.0537
        0.0300 0.0050 1.0544
        0.0350 0.0050 1.0551
        0.0400 0.0050 1.0559
        0.0450 0.0050 1.0567
        
        0.0000 0.0100 1.0280
        0.0050 0.0100 1.0468
        0.0100 0.0100 1.0654
        0.0150 0.0100 1.0663
        0.0200 0.0100 1.0669
        0.0250 0.0100 1.0675
        0.0300 0.0100 1.0682
        0.0350 0.0100 1.0689
        0.0400 0.0100 1.0697
        0.0450 0.0100 1.0705
        
        0.0000 0.0150 1.0423
        0.0050 0.0150 1.0609
        0.0100 0.0150 1.0796
        0.0150 0.0150 1.0803
        0.0200 0.0150 1.0809
        0.0250 0.0150 1.0815
        0.0300 0.0150 1.0822
        0.0350 0.0150 1.0829
        0.0400 0.0150 1.0836
        0.0450 0.0150 1.0844
        
        0.0000 0.0200 1.0566
        0.0050 0.0200 1.0753
        0.0100 0.0200 1.0938
        0.0150 0.0200 1.0944
        0.0200 0.0200 1.0950
        0.0250 0.0200 1.0956
        0.0300 0.0200 1.0962
        0.0350 0.0200 1.0970
        0.0400 0.0200 1.0977
        0.0450 0.0200 1.0985
        
        0.0000 0.0250 1.0711
        0.0050 0.0250 1.0897
        0.0100 0.0250 1.1081
        0.0150 0.0250 1.1086
        0.0200 0.0250 1.1092
        0.0250 0.0250 1.1098
        0.0300 0.0250 1.1105
        0.0350 0.0250 1.1112
        0.0400 0.0250 1.1120
        0.0450 0.0250 1.1128
        
        0.0000 0.0300 1.0858
        0.0050 0.0300 1.1043
        0.0100 0.0300 1.1224
        0.0150 0.0300 1.1230
        0.0200 0.0300 1.1236
        0.0250 0.0300 1.1242
        0.0300 0.0300 1.1249
        0.0350 0.0300 1.1256
        0.0400 0.0300 1.1263
        0.0450 0.0300 1.1271
        
        0.0000 0.0350 1.1006
        0.0050 0.0350 1.1191
        0.0100 0.0350 1.1370
        0.0150 0.0350 1.1375
        0.0200 0.0350 1.1381
        0.0250 0.0350 1.1387
        0.0300 0.0350 1.1394
        0.0350 0.0350 1.1401
        0.0400 0.0350 1.1409
        0.0450 0.0350 1.1417
        
        0.0000 0.0400 1.1155
        0.0050 0.0400 1.1340
        0.0100 0.0400 1.1516
        0.0150 0.0400 1.1522
        0.0200 0.0400 1.1527
        0.0250 0.0400 1.1534
        0.0300 0.0400 1.1540
        0.0350 0.0400 1.1548
        0.0400 0.0400 1.1555
        0.0450 0.0400 1.1563
        
        0.0000 0.0450 1.1306
        0.0050 0.0450 1.1490
        0.0100 0.0450 1.1664
        0.0150 0.0450 1.1670
        0.0200 0.0450 1.1676
        0.0250 0.0450 1.1682
        0.0300 0.0450 1.1688
        0.0350 0.0450 1.1696
        0.0400 0.0450 1.1703
        0.0450 0.0450 1.1711

    };
    \end{groupplot}
    \end{tikzpicture}
    
        \caption{
        \label{Fig:homo_heatmap_fp_fn}
        Normalized service cost obtained by \ecmfna\ and \ecmfno\ when varying the false-positive rate and the false-negative rate.
        The number of caches is $\NumCaches=3$, the miss penalty is $\missp=100$, and the per-cache hit ratio is $\Phit=0.5$. 
        }
    \end{figure}
        \begin{figure}
    \begin{tikzpicture}
        \begin{groupplot}[%
            group style = {group size = 2 by 1, horizontal sep = 10pt},
            width = 0.59\columnwidth,
            height = 4.0cm,
    	    label style={font=\scriptsize},
    	    tick label style={font=\scriptsize},
    		xlabel= {\scriptsize Per Cache Hit Ratio},
    		xlabel near ticks,
    		xtick={0,0.2,0.4,0.6,0.8,1},
            xmin = 0,
            xmax = 1,
            ymin = 1,
            ymax = 1.5,
    	    ymajorgrids=true,
    	    grid style=dashed,
            ]
            \nextgroupplot[
                title = {\scriptsize $\fpr = 0.01, \fnr =0.01$},
                legend style = {%
                    column sep = 10pt,
                    legend columns = -1,
                    legend to name = grouplegend,
                    at={(0.5,1.23)},
                    anchor=north,
                    legend columns=-1,
                    font=\scriptsize
                    },
        		ylabel= {\scriptsize Normalized Service Cost},
        		ylabel near ticks,
                ]
                
        %         % No-Indicator
        % 		\addplot [color = green, mark=+, line width = \plotLineWidth] 
        %         coordinates {
        %             (0.05,1.033)(0.10,1.037)(0.15,1.042)(0.20,1.049)(0.25,1.057)(0.30,1.067)(0.35,1.081)(0.40,1.099)(0.45,1.124)(0.50,1.159)(0.
        %             55,1.209)(0.60,1.281)(0.65,1.390)(0.70,1.552)(0.75,1.791)(0.80,2.121)(0.85,2.502)(0.90,2.730)(0.95,2.222)
        %         };
        %         \addlegendentry{\nif}
                
        		\addplot [color = blue,  mark = *, mark options = {mark size = 2, fill = blue},  line width = \plotLineWidth]
                coordinates {
                    (0.05,1.002)(0.10,1.004)(0.15,1.007)(0.20,1.010)(0.25,1.014)(0.30,1.019)(0.35,1.026)(0.40,1.035)(0.45,1.048)(0.50,1.065)(0.
                    55,1.075)(0.60,1.086)(0.65,1.099)(0.70,1.115)(0.75,1.131)(0.80,1.140)(0.85,1.134)(0.90,1.102)(0.95,1.053)
                };
                \addlegendentry{\ecmfna}
                
                \addplot [color = black, mark = square,      mark options = {mark size = 2, fill = black}, line width = \plotLineWidth]
                coordinates {
                    (0.05,1.002)(0.10,1.004)(0.15,1.007)(0.20,1.010)(0.25,1.014)(0.30,1.019)(0.35,1.026)(0.40,1.035)(0.45,1.048)(0.50,1.065)(0.
                    55,1.080)(0.60,1.098)(0.65,1.120)(0.70,1.144)(0.75,1.167)(0.80,1.180)(0.85,1.170)(0.90,1.125)(0.95,1.060)
                };
                \addlegendentry{\ecmfno}
                % \end{axis}
    
            \nextgroupplot[
                title = {\scriptsize $\fpr = 0.01, \fnr =0.05$},
                % legend style = { column sep = 10pt, legend columns = -1, legend to name = grouplegend,
                % at={(0.5,1.23)},anchor=north,legend columns=-1,font=\scriptsize}
                yticklabels=\empty
                ]
                
        %         %No-Indicator
        % 		\addplot [color = green, mark=+, line width = \plotLineWidth] 
        %         coordinates {
        %             (0.05,1.033)(0.10,1.037)(0.15,1.042)(0.20,1.049)(0.25,1.057)(0.30,1.067)(0.35,1.081)(0.40,1.099)(0.45,1.124)(0.50,1.159)(0.
        %             55,1.209)(0.60,1.281)(0.65,1.390)(0.70,1.552)(0.75,1.791)(0.80,2.121)(0.85,2.502)(0.90,2.730)(0.95,2.222)
        %         };

                % FNA
        		\addplot [color = blue,  mark = *, mark options = {mark size = 2, fill = blue},  line width = \plotLineWidth] 
                coordinates {
                    (0.05,1.008)(0.10,1.017)(0.15,1.028)(0.20,1.033)(0.25,1.035)(0.30,1.037)(0.35,1.041)(0.40,1.047)(0.45,1.056)(0.50,1.068)(0.55,1.079)(0.60,1.092)(0.65,1.106)(0.70,1.123)(0.75,1.140)(0.80,1.151)(0.85,1.143)(0.90,1.110)(0.95,1.056)
                };
                
                %FNO
                \addplot [color = black, mark = square,      mark options = {mark size = 2, fill = black}, line width = \plotLineWidth]
                coordinates {
                    (0.05,1.008)(0.10,1.017)(0.15,1.028)(0.20,1.040)(0.25,1.054)(0.30,1.070)(0.35,1.090)(0.40,1.114)(0.45,1.144)(0.50,1.181)(0.55,1.221)(0.60,1.267)(0.65,1.320)(0.70,1.376)(0.75,1.427)(0.80,1.447)(0.85,1.405)(0.90,1.285)(0.95,1.132)
                };
        \end{groupplot}
        \node at ($(group c2r1) + (-2.2cm,-2.5cm)$) {\ref{grouplegend}};
    \end{tikzpicture}
    \caption{
    \label{fig:homo_hit_ratio}
    Normalized service cost obtained by
    the false-negative aware Expected Cost Minimization policy (\ecmfna), and the  false-negative oblivious Expected Cost Minimization policy (\ecmfno) when varying the varying the per-cache hit ratio.
    The number of caches is $\NumCaches=3$, and the miss penalty is $\missp=100$.
   % The configurations checked are false-negative aware Expected Cost Minimization policy (\ecmfna); and a false-negative oblivious Expected Cost Minimization policy (\ecmfno).
    }
    \end{figure}
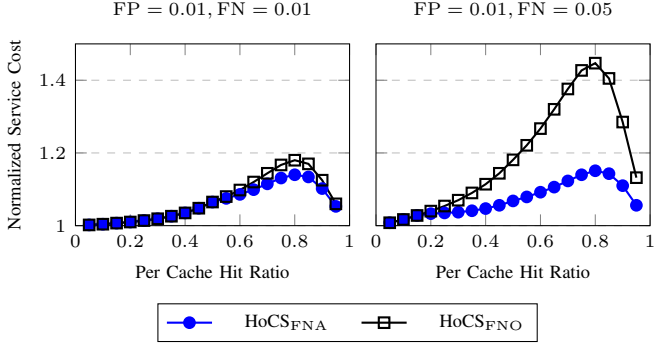
    
    Fig.~\ref{Fig:homo_heatmap_fp_fn} studies \ecmfna\ and \ecmfno, and shows the normalized service cost for different values of the false-positive ratio $\fpr$ and the false-negative ratio $\fnr$, when the per-cache hit ratio $h$ satisfies $h=0.5$.
    As one could expect, the service costs of both variants of \ecmfna\ increase with the false-positive ratio $\fpr$.
    However, for every given false-positive ratio, the service cost of the \ecmfna\ variant increases very mildly when increasing the false-negative ratio.
    In contrast, the service cost of the \ecmfno\ variant is sensitive to the false-negative ratio, and the performance degrades significantly as $\fnr$ increases.
    
    Fig.~\ref{fig:homo_hit_ratio} explores the impact of the hit ratio on the normalized service cost for a given value of $\fpr$ and $\fnr$. 
    When the per-cache hit ratio is very low (e.g., below 0.2), the caches often do not store the requested item. 
    In such a system, {\em any} indicator -- even a perfect indicator -- cannot do well. As a result, the service cost obtained by \ecmfna\ and \ecmfno\ is close to the perfect indicator (recall that the service costs are normalized w.r.t. a perfect indicator). 
    On the other extreme, when the per-cache hit ratio is very high (above 0.9), most items are stored in all the caches. In such a case, even an inaccurate indicator is likely to provide at least one true positive indication, thus bringing the expected cost again closer to that of a PI. 
    
    In the middle ground (e.g., between $\Phit=0.2$ and $\Phit=0.9$), we expect the requested datum to be available only in a small subset of the caches. we need an efficient access strategy to  
    single out this subset. 
    The left subfigure in Fig.~\ref{fig:homo_hit_ratio} shows that having a false-positive ratio and a false-negative ratio of 0.01 may induce an increase of up to $20\%$ in the normalized service cost. 
    This figure shows that when the false-negative ratio is only 0.01, the difference between \ecmfna\ and \ecmfno\ is small. 
    However, when the false-negative ratio is higher (e.g., 0.05 in the right subfigure in Fig.~\ref{fig:homo_hit_ratio}), the gap between \ecmfna\ and \ecmfno\ significantly increases, thus emphasizing the importance of a false-negative-aware approach. False-negative awareness is especially effective when the per-cache hit ratio is between 0.4 and 0.9, making it relevant for many applications.
}

\section{Dynamic and Heterogeneous Settings}
\label{sec:hetero}

In Sec.~,\ref{sec:homo} we assumed that all caches have the same false positive/negative ratios and hit ratios. In this section, we consider more realistic settings where cache attributes are {\em dynamic}, implying that they might not be accurately available at the client, and {\em heterogeneous}, i.e.,  caches may have distinct attributes.
Such an extension poses two main challenges for solving the \cacheprobFna\ problem.
\revision{
The first challenge is dealing with incomplete information, which requires estimating
the exclusion probabilities $\mronej$ and $\mrzeroj$.
These estimations guide the algorithm in its choice of caches to access. We note that such estimations should be done using the {\em limited information} available at the client-side.\footnote{\revision{In Sec.~\ref{sec:sim:estimations} we evaluate the system performance also for ideal estimations that are equipped with more information, but are impractical to implement.}}
}
This is presented in Sec.~\ref{sec:hetero:estimation}.
The second challenge is dealing with heterogeneity, which calls for efficient algorithms for choosing the set of caches to access.
This is presented in Sec.~\ref{sec:hetero:heterogeneity}.

\revision{
We dub the algorithm executed by the client for the \cacheprobFna\ problem in heterogeneous settings \pgmfna. %We begin by a short description of \pgmfna, 
The pseudo-code of \pgmfna\ appears in Algorithm~\ref{Alg:client-side}.
}
In a nutshell, the algorithm periodically obtains information from the caches
(Lines~\ref{alg:hetro:obtain_fp_and_fn}-\ref{alg:hetro:obtain_updated_indicator}), and then uses this information to estimate the exclusion probabilities for each cache (Lines~\ref{alg:hetro:estimate_positive_indication_ratio}-\ref{alg:estimate_nu_j}).
Using these values, \pgmfna\ runs a cache selection algorithm (e.g., one of the algorithms
in~\cite{Accs_Strategies_ToN}). 
% an algorithm designed for scenarios with no false-negatives (e.g., one of the algorithms described in~\cite{Accs_Strategies_ToN}) which determines which caches should be accessed so as to minimize the access cost (Lines~\ref{alg:apply_reduction}-\ref{alg:access_D}).
We now turn to describe of our algorithm's inner workings and analysis.

\begin{algorithm}[t!]
\caption {
\revision{\pgmfna
$(\SetCaches,\Vec{\accsCost},\missp,\alg)$}}
\label{Alg:client-side} 
% 	$\Vec{\ind} (x), \Vec{\fpr}, \Vec{\fnr}, \Vec{\mrone}$
%\itamar{Alternative alg' names: CalcMissProbability (if we use it only for the estimations, and not for the accs itself).
% \label{alg:Hetro}
\begin{algorithmic}[1]
    \State periodically obtain updated $\fprj, \fnrj$ from each cache $j$%
    \label{alg:hetro:obtain_fp_and_fn}
    \State periodically obtain updated indicator $\ind_j$ from each cache $j$%
    \label{alg:hetro:obtain_updated_indicator}
    \State estimate $\Ponej$ for each cache $j$%
    \label{alg:hetro:estimate_positive_indication_ratio}
    \For {every request for datum $x$}
	    \For {$j \in [\NumCaches]$}
	       \State $\Phitj = \frac{q_j - \fnrj}{1 - \fprj - \fnrj}$%
	       \label{alg:estimate_h_j}
	       \Comment{Eq.~\eqref{eq:Pone}}
		    \If {$\ind_j(x) == 1$}%
		    \label{alg:start_generate_input_to_alg}
		    \Comment{should calculate $\mronej$}
		        \State $\mrj = \fprj \cdot (1 - \Phitj) / \Ponej$%
		        \label{alg:estimate_pi_j}
		        \Comment{Eq.~\eqref{eq:mrj1}}
		    \Else 
		    \Comment{should calculate $\mrzeroj$}
		        \State $\mrj = \left(1 -\fprj \right) \cdot (1 - \Phitj) / (1 - \Ponej)$%
		        \label{alg:estimate_nu_j}
		        \Comment{Eq.~\eqref{eq:mrj0}}
		    \EndIf
% 		\State $\ind^*_j(x)=1$ for all $j$
% 		\label{alg:generate_pos_indicator}
% 		\Comment{All-positive indicator}
        \EndFor
        \State $D = \alg (\SetCaches, \Vec{\accsCost}, \Vec{\mr}, \missp)$%
        \label{alg:apply_reduction}
        \Comment{reduction of Theorem~\ref{thm:access_strategies_reduction}}
        \State access $D$%
        \label{alg:access_D}
    \EndFor
\end{algorithmic}
\end{algorithm}

\subsection{Estimating the Exclusion Probabilities}
\label{sec:hetero:estimation}

We now show how one can collect recent statistics of the various parameters governing system behavior, which allow the client to obtain good estimates of 
the current caches' attributes.
Our solutions use the insights presented in Sec.~\ref{sec:bloom_filters}. In particular, we will use~\eqref{eq:estimate_fnr} and~\eqref{eq:estimate_fpr} for estimating the false-positive ratios and the false-negative ratios of the distinct caches,
enabling us to compute $\mronej$, and $\mrzeroj$.

\subsubsection{Cache-side Algorithm}
\label{sec:cache-side-alg}
The cache maintains both the stale Bloom filter (i.e., the most recently advertised Bloom filter, which is also available at the client) and the updated Bloom filter.
Along a sequence of requests $\sigma$, each cache $j$ estimates the false-negative ratio
and the false-positive ratio, according to ~\eqref{eq:estimate_fnr} and~\eqref{eq:estimate_fpr}, by comparing the stale and updated Bloom filters.

We note that these estimations can be done periodically to reduce the computational overhead of comparing the stale and updated bloom filters.
These estimates are sent (periodically) to the client (Line~\ref{alg:hetro:obtain_fp_and_fn} of \pgmfna).
Each cache further (periodically) sends an updated indicator to the client (Line~\ref{alg:hetro:obtain_updated_indicator} of \pgmfna).
We note that these updates are sent in an arbitrary asynchronous manner to the client.

\subsubsection{Client-side Algorithm}
\label{sec:client-side-alg}

We now show how the client may estimate the exclusion probabilities $\mronej$ and $\mrzeroj$ for every cache $j$, given the estimations of $\fprj$ and $\fnrj$ which are periodically provided by the cache.

For evaluating $\Ponej$, the client periodically estimates the probability $\Pr(\ind_{j}(x) = 1)$ {\em empirically}, using a weighted exponential moving average.
Formally, consider a sequence of requests $\sigma$, and consider epochs of $T$ requests.
Let $a_j(s,t)$ denote the number of positive indications of indicator $\ind_j$ for requests $s+1,\ldots,t$ made 
by the client. 
For any $t \leq T$ we let the estimated positive indication ratio after handling request $t$ be $\Pone_{j,t}=\frac{a_j(0,t)}{t}$.
For every $i=1,2,\ldots$ and every $i T < t < (i+1) T$, we let $\Pone_{j,t}$ be the most recent estimate over epochs of $T$ requests, i.e., $\Pone_{j,t}=\Pone_{j,\floor{t/T}\cdot T}$,
and the estimate is updated at $t=(i+1)T$ such that
\begin{align}\label{eq:ewma}
\Pone_{j,(i+1)T}
&= \delta \cdot \frac{a_j(iT,(i+1)T)}{T} + (1-\delta) \cdot \Pone_{j,iT},
\end{align}
where $\delta\in (0,1)$ is some constant governing the dynamics of the estimate change.
We note that only the client can perform such an estimation since it requires knowing all the requests in $\sigma$, and not only requests for which the cache has been accessed.

Given the current values for $\fprj$, $\fnrj$, and $\Ponej$, for every item being requested in the sequence $\sigma$, the client estimates the hit ratio $h_j$ (Line~\ref{alg:estimate_h_j}), and the exclusion probabilities $\mronej$  (Line~\ref{alg:estimate_pi_j}) and $\mrzeroj$ (Line~\ref{alg:estimate_nu_j}) 
using~\eqref{eq:Pone}, \eqref{eq:mrj1}, and~\eqref{eq:mrj0}, respectively. These values are assigned to variables $\rho_j$, as explained in the sequel.

\subsection{Choosing the Caches to Access}
\label{sec:hetero:heterogeneity}

This section shows how to use the estimations of the exclusion probability of each cache $j$ to develop a false-negative-aware access strategy. In particular, we show how to extend any false-negative-oblivious access strategy (e.g., those in~\cite{Accs_Strategies_ToN}), to consider a non-zero false-negative ratio.

For any set of caches $D$, the client's estimations of the exclusion probabilities essentially determine the expected miss cost.
We let $\mrj$ denote the probability of a miss while accessing cache $j$, given its indication for the requested item.
Formally, $\mrj = \mronej$ if $\ind_j(x) = 1$, and $\mrj = \mrzeroj$ if $\ind_j(x) = 0$.
Then, the expected miss cost can be expressed by $\missp \cdot \prod_{j \in \subsetCaches} \mrj$, and the objective function defined in~\eqref{eq:def_service_cost} translates to finding the set of caches $D$ minimizing
\begin{align}
\label{eq:def_service_cost_access_stratgies}
\costhetero_x (\subsetCaches)
&= 
\sum\nolimits_{j \in \subsetCaches} \accsCost_j + \missp \cdot
\prod\nolimits_{j \in \subsetCaches} \mrj.
\end{align}

The problem of finding a set of caches $D$ (out of those with a positive indication) minimizing an objective of the form depicted in~\eqref{eq:def_service_cost_access_stratgies} has been studied in~\cite{Accs_Strategies_ToN}, where they present several approximation algorithms for the problem.
The problem studied in~\cite{Accs_Strategies_ToN} is essentially equivalent to assuming that there are no false-negative indications, and therefore it suffices to consider only caches for which $\ind_j(x) = 1$.
We refer to this special case 
as the
\revision{
false-negative-oblivious cache-selection problem (\cacheprobFno).
}

When considering the
\cacheprobFno\
problem within our model, the framework of~\cite{Accs_Strategies_ToN} can be viewed as assuming that {\em all} caches have a positive indication, and $\mrj$ represents the positive exclusion probability of cache $j$.
Equivalently, the model of~\cite{Accs_Strategies_ToN} essentially assumed that $\mrzeroj=1$ for all $j$, which is fundamentally not the case in the
\cacheprobFna\
problem.

Our proposed algorithm \pgmfna\ selects the set of caches to access as follows:
\begin{inparaenum}[(i)]
\item \pgmfna\ gets as input an algorithm $\alg$ for solving the
\cacheprobFno\
problem (assuming all caches have a positive indication),
\item generates the appropriate input for this algorithm (as described above) in Lines~\ref{alg:start_generate_input_to_alg}-\ref{alg:estimate_nu_j}, and
\item accesses the set of caches prescribed by algorithm $\alg$.
\end{inparaenum}
  
The following theorem serves to analyze the worst-case performance guarantees of \pgmfna.
\begin{theorem}
\label{thm:access_strategies_reduction}
If there exists an algorithm $\alg$ that is an $\alpha$-approximation algorithm for the
\cacheprobFno\
problem, then there exists an $\alpha$-approximation algorithm for the
\cacheprobFna\
problem (with arbitrary values of $\mrzeroj$).
\end{theorem}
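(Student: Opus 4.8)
The plan is to show that, once the exclusion probabilities are fixed, the \cacheprobFna{} problem \emph{is}, verbatim, a \cacheprobFno{} problem, so that an approximation algorithm for the latter becomes one for the former after a trivial relabeling. Fix a query $x$ and, as in~\eqref{eq:def_service_cost_access_stratgies}, set $\mrj=\mronej$ when $\ind_j(x)=1$ and $\mrj=\mrzeroj$ when $\ind_j(x)=0$. The first step is the identity, for every $\subsetCaches\subseteq\SetCaches$,
\[
\prod_{\substack{j\in\subsetCaches\\ \ind_j(x)=1}}\mronej\;\cdot\;\prod_{\substack{j\in\subsetCaches\\ \ind_j(x)=0}}\mrzeroj\;=\;\prod_{j\in\subsetCaches}\mrj ,
\]
which lets us rewrite the \cacheprobFna{} objective~\eqref{eq:def_service_cost} as $\costhetero_x(\subsetCaches)=\sum_{j\in\subsetCaches}\accsCost_j+\missp\prod_{j\in\subsetCaches}\mrj$, i.e., exactly the form~\eqref{eq:def_service_cost_access_stratgies}.

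Next I would construct the \cacheprobFno{} instance handed to $\alg$: keep the cache set $\SetCaches$, the costs $\accsCost_j$, and the miss penalty $\missp$, but \emph{declare every cache to have a positive indication}, and give cache $j$ the positive exclusion probability $\mrj$. This is an admissible input: $\mronej\in(0,1)$ by assumption, and by Proposition~\ref{Prop:suffice-accurate} together with $\mronej>0$ we have $\mrzeroj\in(0,1]$; if a particular algorithm of~\cite{Accs_Strategies_ToN} needs the probabilities bounded away from $1$, replace a value $1$ by $1-\eps$, which perturbs every objective value by at most $\missp\eps$, and let $\eps\to0$. On this instance the \cacheprobFno{} objective of a subset $\subsetCaches$ is, by definition, $\sum_{j\in\subsetCaches}\accsCost_j+\missp\prod_{j\in\subsetCaches}\mrj$, which is term for term the expression for $\costhetero_x(\subsetCaches)$ obtained above. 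Hence the two problems share the feasible region $2^{\SetCaches}$ and the objective function, and therefore the same optimal value $\opt$ and the same minimizers.

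The conclusion is then immediate. Let $\subsetCaches^{\alg}$ be the set returned by $\alg$ on the transformed instance (this is precisely Line~\ref{alg:apply_reduction} of \pgmfna). The assumed guarantee gives $\sum_{j\in\subsetCaches^{\alg}}\accsCost_j+\missp\prod_{j\in\subsetCaches^{\alg}}\mrj\le\alpha\cdot\opt$; but the left-hand side equals $\costhetero_x(\subsetCaches^{\alg})$, and $\opt$ is also the optimum of the \cacheprobFna{} instance, so $\subsetCaches^{\alg}$ is an $\alpha$-approximate solution for \cacheprobFna{}. Applying this to every requested $x$ (Lines~\ref{alg:start_generate_input_to_alg}-\ref{alg:access_D} of \pgmfna) yields the claimed $\alpha$-approximation algorithm.

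I do not expect a genuine obstacle: the point is simply that the formulation of~\cite{Accs_Strategies_ToN}, read in our notation, never exploits the restriction $\mrzeroj=1$, since it only ever sees a per-cache vector of miss probabilities, a cost vector, and $\missp$; hence the generality over $\mrzeroj$ claimed in the statement costs nothing. The two points that do deserve explicit care are (i) confirming that the transformed instance is admissible for $\alg$ (the range of the $\mrj$'s above, and that $\alg$ imposes no hidden structural constraint, e.g., nonemptiness of $\subsetCaches$, which if present would need to be shown vacuous or harmless on both sides), and (ii) noting that this theorem concerns the reduction \emph{given} the exclusion probabilities; the estimation error incurred in Lines~\ref{alg:hetro:estimate_positive_indication_ratio}-\ref{alg:estimate_nu_j} of \pgmfna{} is a separate issue, addressed empirically in Sec.~\ref{sec:sim}.
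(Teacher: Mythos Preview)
Your proposal is correct and follows essentially the same reduction as the paper: both define per-cache miss probabilities $\mrj$ according to the actual indication ($\mronej$ if positive, $\mrzeroj$ if negative), declare all caches to have positive indications, and observe that the resulting \cacheprobFno{} instance has the same feasible region and objective as the original \cacheprobFna{} instance, so the $\alpha$-guarantee transfers verbatim. Your extra remarks on admissibility (the $\eps$-perturbation and the separation from estimation error) are sound additions but not required for the core argument.
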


\begin{proof}
\revision{
Assume an input to the \cacheprobFna\
problem such that every cache $j$ has its indicator $\ind_j$, and its positive and negative exclusion probabilities $\mrone_j$ and $\mrzero_j$, respectively.
In what follows we slightly abuse notation, and refer to $\costhetero_{x,\Vec{\mrone},\Vec{\mrzero},\Vec{\ind}}$ as the expected service cost for an input $x$, given these system parameters.
Let $\alg$ be an $\alpha$-approximation algorithm for the
\cacheprobFno\ problem~\eqref{eq:def_service_cost_access_stratgies}, for which its expected service cost for an input $x$ is referred to as
$\costhetero_{x,\Vec{\mrone},\Vec{\ind}}$.

Assume each cache $j$ has some arbitrary negative exclusion probability, $\mrzeroj$.
For every cache $j$, we let $\mrone_j^*=\mrone_j$ if $\ind_j(x)=1$, and let $\mrone_j^*=\mrzero_j$ if $\ind_j(x)=0$.
For every cache $j$ we define indicator $\ind_j^*$ such that $\ind_j^*(x)=1$, implying that the set of caches with a positive indication according to $\Vec{\ind}^*$ is the set of all caches, $\SetCaches$.

We define algorithm $\alg^*$ such that $\alg^*$ returns the output of $\alg$ for the inputs of $\Vec{\mrone}^*$ (for the positive exclusion probabilities), 
% $\Vec{\mrzero}^*$ (for the negative exclusion probabilities), 
and the set of all caches with a positive indication according to $\Vec{\ind}^*$ (i.e., $\SetCaches$).
We now show that the solution returned by $\alg^*$ is an $\alpha$-approximate solution for the
\cacheprobFna\
problem with exclusion probabilities $\mronej$ and $\mrzeroj$.

By the assumption on $\alg$, its output $D$ satisfies
\begin{align}\label{eq:thm_proof_service_cost_fno}
\costhetero_{x,\Vec{\mrone}^*,\Vec{\ind}^*}(D) \leq \alpha \cdot \costhetero_{x,\Vec{\mrone}^*,\Vec{\ind}^*}(D^*)
\end{align}
where $D^*$ is an optimal solution to the
\cacheprobFno\
problem with $\Vec{\mrone}^*$, and the set of caches induced by $\Vec{\ind}^*$ as inputs.
By the definition of $\Vec{\mrone}^*$ and $\Vec{\ind}^*$ it follows that for every set of caches $\tilde{D}$,
\begin{align}\label{eq:thm_proof_service_cost_fna}
\costhetero_{x,\Vec{\mrone}^*,\Vec{\ind}^*}(&\tilde{D})  =
% \sum_{j \in \tilde{D}} \accsCost_j + \missp
% \prod_{\substack{j \in \tilde{D}\\ \ind^*_j(x) = 1}} \mrone_j^* \cdot \prod_{\substack{j \in \tilde{D}\\ \ind^*_j(x) = 0}} 
% \mrzero_j^* \notag \\
\sum_{j \in \tilde{D}} \accsCost_j + \missp
\prod_{\substack{j \in \tilde{D}\\ \ind^*_j(x) = 1}} \mrone_j^* \notag \notag \\ 
& = \sum_{j \in \tilde{D}} \accsCost_j + \missp
\prod_{j \in \tilde{D}} \Big[ \ind_j(x) \cdot \mronej + \left(1 - \ind_j(x) \right) \mrzeroj \Big] \notag \\
& = \costhetero_{x,\Vec{\mrone},\Vec{\mrzero},\Vec{\ind}}(\tilde{D}),
\end{align}
where the first equality follows from the fact that $\ind^*_j(x) = 1$ for all $j$, the second equality follows from the definition of $\mronej^*$, and the third equality follows from the definition of $\costhetero$~\eqref{eq:def_service_cost}.
Combining~\eqref{eq:thm_proof_service_cost_fno} and~\eqref{eq:thm_proof_service_cost_fna}, the output $D$ of $\alg$ satisfies
\begin{align}
     \costhetero_{x,\Vec{\mrone},\Vec{\mrzero},\Vec{\ind}}(D) & = 
    \costhetero_{x,\Vec{\mrone}^*,\Vec{\ind}^*}(D) \notag \\
    & \leq  \alpha \cdot \costhetero_{x,\Vec{\mrone}^*,\Vec{\ind}^*}(D^*) \notag \\
     & = \alpha \cdot \costhetero_{x,\Vec{\mrone},\Vec{\mrzero},\Vec{\ind}}(D^*),
\end{align}
which completes the proof.
}
\end{proof}

The proof of Theorem~\ref{thm:access_strategies_reduction} implies the following corollary.

\begin{corollary}
\label{cor:alg_approximation}
If the estimations of $\mronej$ and $\mrzeroj$ produced by \pgmfna\ are precise, and $\alg$ used by \pgmfna\ is an $\alpha$-approximation algorithm for the
\cacheprobFno\
problem, then \pgmfna\ produces an $\alpha$-approximate solution to the
\cacheprobFna\
problem.
\end{corollary}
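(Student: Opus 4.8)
The plan is to recognize that, under the precise-estimation hypothesis, \pgmfna\ is nothing but a concrete instantiation of the algorithm $\alg^{*}$ constructed in the proof of Theorem~\ref{thm:access_strategies_reduction}; the corollary then follows immediately from that theorem.

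First I would make precise what the hypothesis provides. By assumption, for each requested item $x$ and each cache $j$, the value $\mrj$ that \pgmfna\ computes in Lines~\ref{alg:estimate_h_j}--\ref{alg:estimate_nu_j} (via~\eqref{eq:Pone},~\eqref{eq:mrj1}, and~\eqref{eq:mrj0}) equals the true positive exclusion probability $\mronej$ when $\ind_j(x)=1$, and the true negative exclusion probability $\mrzeroj$ when $\ind_j(x)=0$. Thus the vector $\Vec{\mr}$ passed to $\alg$ in Line~\ref{alg:apply_reduction} is exactly the vector $\Vec{\mrone}^{*}$ defined in the proof of Theorem~\ref{thm:access_strategies_reduction}.

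Next I would observe that \pgmfna\ invokes $\alg$ on the full cache set $\SetCaches$ together with $\Vec{\mr}$, which is precisely the input $\alg^{*}$ feeds to $\alg$: relabelling every indicator to be positive (the $\Vec{\ind}^{*}$ of that proof) makes the set of ``positive-indication'' caches equal to $\SetCaches$. Hence the set $D$ returned in Line~\ref{alg:apply_reduction} and accessed in Line~\ref{alg:access_D} coincides with the output of $\alg^{*}$ on the same \cacheprobFna\ instance. Since $\alg$ is an $\alpha$-approximation for \cacheprobFno, Theorem~\ref{thm:access_strategies_reduction} guarantees that $\alg^{*}$ — and therefore \pgmfna\ — returns a set whose expected service cost $\costhetero_{x,\Vec{\mrone},\Vec{\mrzero},\Vec{\ind}}$ is within a factor $\alpha$ of the \cacheprobFna\ optimum, which is the claim.

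The only delicate point, and the one I would state carefully, is where the precise-estimation hypothesis is used: it is exactly what makes the cost that $\alg$ optimizes coincide with the actual expected service cost $\costhetero$, so that the key identity $\costhetero_{x,\Vec{\mrone}^{*},\Vec{\ind}^{*}}(\tilde{D}) = \costhetero_{x,\Vec{\mrone},\Vec{\mrzero},\Vec{\ind}}(\tilde{D})$ from the proof of Theorem~\ref{thm:access_strategies_reduction} transfers verbatim to \pgmfna. With imprecise estimates this identification breaks down, which is precisely why the corollary needs the hypothesis; there is no real computation to grind through beyond re-reading the proof of the theorem with this identification in mind.
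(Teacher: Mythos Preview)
Your proposal is correct and matches the paper's approach: the paper itself offers no separate proof, merely stating that the corollary is implied by the proof of Theorem~\ref{thm:access_strategies_reduction}. You have spelled out exactly the identification the paper leaves implicit---that under precise estimations the vector $\Vec{\mr}$ computed by \pgmfna\ coincides with $\Vec{\mrone}^{*}$, so that \pgmfna\ is the algorithm $\alg^{*}$ of that proof---which is all that is needed.
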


Combining Corollary~\ref{cor:alg_approximation} with the results of~\cite{Accs_Strategies_ToN}, we obtain a myriad of trade-offs and possible approximation guarantees for \pgmfna.
In particular, in Sec.~\ref{sec:sim} we consider the performance of one specific realization of \pgmfna, which uses algorithm \dspgm\ for the
\cacheprobFno\
problem presented in~\cite{Accs_Strategies_ToN}.

\section{Simulation Study}
\label{sec:sim}
In this section, we evaluate the performance and trade-offs of our proposed false-negative aware algorithm, \pgmfna, in various scenarios, using traces of real-life workloads.
Our evaluation shows that false-negative awareness improves the oblivious approach across the board. In some cases, one needs an order of magnitude more bandwidth or more cache entries to match our false-negative aware approach's service cost. 
% \itamar{To save space, consider removing the sentence below.}
The effect is consistent for diverse cache sizes, workloads, and when increasing the number of caches. The difference is especially significant when the miss penalty and update interval are large. 
We begin by describing our evaluation settings and parameters.

\subsection{Simulation Settings}\label{sec:sim_settings}

\paragraph*{Traces}
We use the  first 1M requests from each of these real workload traces. 
\begin{inparaenum}[(i)]
\item {\em Wiki}: Read requests to Wikipedia pages~\cite{WikiBench}.
%\footnote{The trace includes requests made on Sep. 22, 2007, from 06:12 and on}.
\item {\em Gradle}: 
Gradle is a build tool for caching compiled libraries in large projects. The trace was provided by~\cite{Scarab_and_Gradle_traces}.
%that reduces the compilation time of large projects by caching build parts. clients access the cache to fetch up-to-date compiled parts, and they compile from scratch upon a cache miss. This trace~\cite{Scarab_and_Gradle_traces} was provided by the Gradle project.
\item {\em Scarab}:
A trace from Scarab Research, a personalized recommendation system for e-commerce sites~\cite{Scarab_and_Gradle_traces}.
\item {F2}: Traces from a financial transaction processing system~\cite{Umass_traces}.
\end{inparaenum}

\paragraph*{Caches}
We consider a system-wide request distribution where a missed item is placed in a single cache chosen by the controller. 
Such an approach is common in large distributed systems, such as Memcached~\cite{Memcached} and Kademlia~\cite{Kaleidoscope} for load balancing and for maximizing the cached content.

Each cache applies the \emph{Least Recently Used (LRU)} eviction policy which is arguably the most commonly used policy. %used in practice. 

\paragraph*{Indicators}
Each cache $j$ of size $\cacheSize_j$ periodically advertises an indicator $\ind_j$ of size $\mbpe \cdot \cacheSize_j$.
For computing the indicator, cache $j$ maintains a {\em Counting Bloom Filter} (CBF)~\cite{CBF} with 3-bit counters, where the number of counters is $\mbpe \cdot \cacheSize_j$.
The advantage of the CBF over a simple Bloom filter ~\cite{Bloom} is that the CBF supports removal of items too. 
Thus, 
we add an item to the CBF upon admission to the cache and remove an item from the CBF upon eviction. 
The cache constructs the advertised indicator by compressing the CBF to a simple (1 bit-counter) Bloom filter where a bit is set iff the respective counter in the CBF is strictly positive. 
We pick the number of hash functions that minimizes the false-positive probability~\cite{Survey18}. 
% Depending upon the varying value of $k$, these misindication ratios may be either mutually independent, positively correlated, or negatively correlated. 
% Hence, we evaluate our algorithms in versatile domains and, in particular, in domains where the misindication ratios are mutually dependent.

\paragraph*{Access Strategy Algorithms compared}
Recall that \pgmfna\ makes use of an algorithm for solving the \cacheprobFna\ 
problem for the case where indicators exhibit no false-negatives.
In our evaluation, we make use of the \dspgm\ algorithm from~\cite{Accs_Strategies_ToN}.
This strategy was shown to produce a $(\log \missp)$-approximation for the \cacheprobFna\ problem with no false-negatives. By Corollary~\ref{cor:alg_approximation}, this guarantee also applies to the \cacheprobFna\ problem. Furthermore, \dspgm\ exhibits close-to-optimal results in practice, when tested on real-world workloads~\cite{Accs_Strategies_ToN}. 
% For further details about \dspgm~\cite{Accs_Strategies_ToN}. 

We consider two benchmarks for evaluating the performance of
\pgmfna:
\begin{inparaenum}[(i)]
\item applying the vanilla \dspgm\ algorithm (\pgmfno), which only considers accessing caches with a positive indication (albeit stale), using only the estimates of $\mronej$ for every cache $j$, and using $\mrzeroj=1$ for all $j$, and
\item the hypothetical ideal strategy that uses {\em perfect information} (\pif), i.e., a strategy that always has access to the precise cache content, which accesses the cheapest cache containing an item if such a cache exists, and doesn't access any cache otherwise.
\end{inparaenum}

Throughout our evaluation, both \pgmfna\ and \pgmfno\ evaluate $\Ponej$ with a time horizon of $T=100$ requests and using $\delta=0.25$ for the weighting of the moving average.
Furthermore, each cache $j$ re-estimates the false-positive ratio $\fprj$ and the false-negative ratio $\fnrj$ once every 50 insertions to the cache.

% We found this configuration to yield stable values, while keeping the algorithms reactive enough to changes in the workload. 
% % Running \pgmfna, however, involves evaluating several additional parameters (see Sec.~\ref{sec:client-side-alg}). To evaluate  $\Vec{\Pone}$ we use again exponential weighted moving average with $\delta=0.2$ and $T=100$. 
% In addition, \pgmfna\ lets each cache $j$ re-estimate the false-negative ratio $\fnrj$ and the false-positive ratio $\fprj$ (Eq.~\eqref{eq:estimate_fnr}-\eqref{eq:estimate_fpr}) once in 50 accesses. We chose this value for letting \pgmfna\ be reactive enough to changes in the false-negative ratio, which may change very fast, as illustrated in Fig.~\ref{fig:Fn_Vs_uInterval}.

\paragraph*{Evaluation metric}
We consider the mean service cost per request over the entire input. 
We also consider the {\em normalized cost} where we divide each algorithm's  mean cost by that of the \pif\ strategy. %Such a metric also shows how close to optimal the algorithms are. 
While infeasible, it is instructive to use it as a lower bound on the cost of {\em any} policy for solving the \cacheprobFna\ problem.

\paragraph*{Baseline scenario} 
Unless stated otherwise, our evaluation considers three caches whose access costs are 1, 2, and 3, and a miss penalty of 100 (i.e., 50 times the average cache access cost). Each cache can store 10K elements. Similar cache sizes were considered by existing works in the field~\cite{TinyLFU,adaptive_cache}, and can further be motivated, e.g., by 
Trivago's Memcached~\cite{Trivago} that utilizes a distributed system of caches, each of size 4GB, containing items with a typical size of about 1MB.

The update interval is measured by the number of insertions. In our baseline scenario, $0.1\cdot \cacheSize_j$ insertions are performed between subsequent indicator advertisements. This translates to an advertisement once every 1K insertions for the default 10K-items cache. This is in accordance with previous work evaluating such systems~\cite{summary_cache}.
% where each cache advertises an updated indicator once the number of insertions performed since the previous update reaches some ratio of its capacity.
Note that periodically advertising the indicator is sometimes done once in every fixed time interval (e.g., by Squid~\cite{squidspec}).
However, the {\em optimal} time interval length strongly depends on the workload being served.
Our approach removes this dependency on the characteristics of the workload, and allows for a clearer evaluation of the effect the various system parameters have on performance, in scenarios where indicators become stale.

% A similar approach of periodically updating the indicator is used by Squid~\cite{squidspec}, and Apache Phoenix~\cite{ApacheCache}.
The advertised indicator of each cache $j$ uses $\mbpe=14$, implying an indicator size of $14\cdot \cacheSize_j$, where the number of hash functions is optimized to minimize the false-positive ratio. In particular, in our baseline scenario, this translates to a designed false-positive ratio of 0.1\%~\cite{Survey12}.
Each evaluation explores the impact of varying one of the system's parameters, where the remaining parameters are set according to our baseline scenario.
Our Python code is available in~\cite{Github_ofanan}.

\subsection{Impact of Miss Penalty and Workload Diversity}
\begin{figure}
\centering
\begin{tikzpicture}
\pgfplotstableread{three_caches.dat}{\loadedtable}

\begin{groupplot}[
    group style=
        {
        group size=3 by 1,
        xlabels at=edge bottom,
        ylabels at=edge left,
        horizontal sep=0.02\textwidth,
        vertical sep=0.06\textwidth,
        group name=plots,
        },
    ybar,
    major x tick style = transparent,
    ymajorgrids = true,
    width=0.43\columnwidth,
    height=0.37\columnwidth,
    ytick={0, 0.5, 1,1.5,2,2.5,3,3.5,4},
    yticklabel style={
        font=\footnotesize,
        text width=0.035\textwidth,
        align=right,
        inner xsep=0pt,
        xshift=-0.014\textwidth,
        },
    ylabel=Normalized service cost,
    ylabel style={
        font=\footnotesize,
    },
    symbolic x coords = {wiki,gradle,scarab,F2},
    xticklabel style={
        font=\footnotesize,
        align=center,
        yshift=0.013\textwidth,
        % xshift=-0.005\textwidth,
        rotate=45,
        },
    enlarge x limits=0.2,
    xtick=data,
    ymin = 1,
    ymax= 4,
    legend style={
        legend columns=-1,
        at={(-0.645,1.7)},
        anchor=north,
        /tikz/every even column/.append style={column sep=0.3cm},
        },
    ybar,
    area legend,
    ]

\nextgroupplot[title={$\missp=50$},font=\footnotesize,]
        \addplot[ybar, draw=black, fill = white, bar width = 4pt,] table [ybar,x=input,y=FNO50]{\loadedtable};
        \addplot[ybar, draw=blue, fill = blue, bar width = 4pt,] table [ybar,x=input,y=FNA50]{\loadedtable};
\nextgroupplot[title={$\missp=100$},font=\footnotesize,yticklabels=\empty]
        \addplot[ybar, draw=black, fill = white, bar width = 4pt,] table [ybar,x=input,y=FNO100]{\loadedtable};
        \addplot[ybar, draw=blue, fill = blue, bar width = 4pt,] table [ybar,x=input,y=FNA100]{\loadedtable};
\nextgroupplot[title={$\missp=500$},font=\footnotesize,yticklabels=\empty]
        \addplot[ybar, draw=black, fill = white, bar width = 4pt,] table [ybar,x=input,y=FNO500]{\loadedtable};
        \addplot[ybar, draw=blue, fill = blue, bar width = 4pt,] table [ybar,x=input,y=FNA500]{\loadedtable};
\legend{\pgmfno,\pgmfna}
\end{groupplot}
\end{tikzpicture}
\caption{Normalized cost of the heterogeneous 3-caches baseline scenario for varying traces and miss penalty values.}
\label{fig:all_missp_and_traces}
\vspace{-0.5cm}
\end{figure}
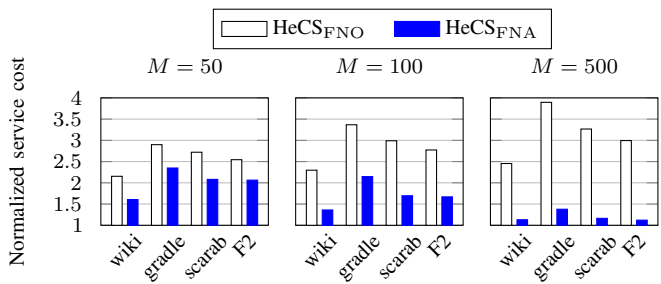

We first compare the performance of \pgmfno\ and \pgmfna\ when varying the miss penalty values $\missp$ in the range $\{50, 100, 500\}$.
The results in Fig.~\ref{fig:all_missp_and_traces} show that while the performance of the false-negative oblivious policy \pgmfno\ degrades as the miss penalty increases, the performance of our proposed false-negative aware algorithm \pgmfna\ improves significantly.
Furthermore, the performance of \pgmfna\ tends to the {\em optimal} performance as the miss penalty increases.
% in many cases \pgmfna\ provides results close to a perfect indicator -- especially when the miss penalty is high $\missp = 500$.
This behavior follows from the fact that a higher miss penalty accentuates the impact of false-negative events.
In particular, ignoring negative indications (as is done by \pgmfno) is severely penalized by an increased expected miss cost in cases where the miss penalty is large.

Fig.~\ref{fig:all_missp_and_traces} also demonstrates significant differences across distinct workloads.
\pgmfno's worst performance is exhibited for the Gradle trace, whereas its best performance is obtained for the Wiki trace.
To understand this phenomenon, we observe that Gradle exhibits a high recency-bias, where items are requested shortly after their first appearance.
As false-negatives occur when the indicator does not reflect the insertion of new items, \pgmfno, which never accesses caches with a negative indication, fails to take advantage of this recency bias.
In contrast, the Wiki trace is more frequency-biased, which implies that popular items do not rapidly change over time and that the impact of false-negatives is less pronounced.
We continue with the Wiki and Gradle traces, which are more sensitive to false negatives. 
%in subsequent sections, we focus our attention on these traces alone.
%As the Wiki trace and the Gradle trace exhibit the most extreme scenarios in terms of the impact of false-negatives, due to space constraints, in subsequent sections, we focus our attention on these traces alone.
%which represent the scenarios with the lowest, and highest, probability for a false-negative event. 
% Additionally, we fix the miss penalty to $\missp = 100$, and all the other parameters to the values detailed in the baseline scenario (Sec.~\ref{sec:sim_settings}).

\subsection{Impact of Advertisement Policy and Indicator Parameters}\label{sec:sim:ind_parameters}

\subsubsection{Update interval}
We now turn to study the effect of staleness on the performance of our algorithm.
To this end, we let the update interval, namely, the number of insertions between indicator advertisement, vary between 16 and 8K (8192), and consider the normalized cost of both \pgmfna\ and \pgmfno. These results are presented in Fig.~\ref{Fig:uInterval}, where we consider the performance for the Gradle and Wiki workloads.

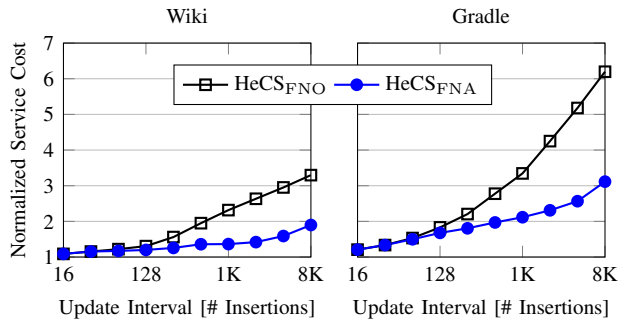
\begin{figure}[t]
    \begin{tikzpicture}
        \begin{groupplot}[
            group style=
                {
                columns = 2,
                xlabels at=edge bottom,
                ylabels at=edge left,
                horizontal sep=0.07\columnwidth,
                group name=plots
                },
    		width  = \plotWidth,
     		height = \plotHeight,
     		minor tick style={draw=none},
            ymajorgrids = true,
            ylabel = {Normalized Service Cost},
    		xmode = log,
    		xmin = 16,
     		xmax = 8192,
     		xtick =       {16, 128, 1024, 8192},
     		xticklabels = {16,128,1K,8K},
     		ymin = 1,
     		ymax = 7,
      		ytick       = {1, 2, 3, 4, 5, 6, 7},
      		yticklabels = {1, 2, 3, 4, 5, 6, 7},
     		ylabel near ticks,
    		legend style = {at={(-0.1,0.9)}, anchor=north, legend columns=-1, font=\footnotesize},
    		xlabel = {Update Interval [\# Insertions]}, 
    		xlabel near ticks,
    	    label style={font=\footnotesize},
    	    tick label style={font=\footnotesize},
    	]
        	
        \nextgroupplot[
            title = Wiki,
            title style ={
                font=\footnotesize,
                yshift = -2pt,
                },
     		]

		\addplot [color = black, mark = square,      mark options = {mark size = 2, fill = black}, line width = \plotLineWidth] coordinates {
		(16, 1.0885)(32, 1.1541)(64, 1.2214)(128, 1.3057)(256, 1.5600)(512, 1.9498)(1024, 2.3163)(2048, 2.6338)(4096, 2.9516)(8192, 3.2969)
        % The results below are when estimating mr1 by hist, rather than by analysis
		% (16, 1.0853)(32, 1.1287)(64, 1.2093)(128, 1.3009)(256, 1.5544)(512, 1.9454)(1024, 2.3042)(2048, 2.6220)(4096, 2.9227)(8192, 3.2964)
		};		

		\addplot [color = blue,  mark = *, mark options = {mark size = 2, fill = blue},  line width = \plotLineWidth] coordinates {
		(16, 1.0885)(32, 1.1541)(64, 1.1719)(128, 1.1993)(256, 1.2555)(512, 1.3588)(1024, 1.3632)(2048, 1.4183)(4096, 1.5892)(8192, 1.8980)
		};		

        \nextgroupplot[
            title = Gradle,
            title style ={
            font=\footnotesize,
            yshift = -2pt,
                },
      		yticklabels = \empty,
     		]
     		
		\addplot [color = black, mark = square,      mark options = {mark size = 2, fill = black}, line width = \plotLineWidth] coordinates {
		(16, 1.2073)(32, 1.3354)(64, 1.5327)(128, 1.8312)(256, 2.2053)(512, 2.7762)(1024, 3.3449)(2048, 4.2503)(4096, 5.1793)(8192, 6.1963)
        % The results below are when estimating mr1 by hist, rather than by analysis
		% (16, 1.2002)(32, 1.3302)(64, 1.5279)(128, 1.8266)(256, 2.1963)(512, 2.7637)(1024, 3.3327)(2048, 4.2499)(4096, 5.1793)(8192, 6.1944)
		};		
		\addlegendentry {\pgmfno}

		\addplot [color = blue,  mark = *, mark options = {mark size = 2, fill = blue},  line width = \plotLineWidth] coordinates {
		(16, 1.2073)(32, 1.3354)(64, 1.4983)(128, 1.6775)(256, 1.8054)(512, 1.9696)(1024, 2.1165)(2048, 2.3102)(4096, 2.5659)(8192, 3.1145)
		};		
		\addlegendentry {\pgmfna}

        \end{groupplot}2
    \end{tikzpicture}
    \caption{\label{Fig:uInterval}Normalized cost of the heterogeneous 3-caches baseline scenario for varying update intervals. Update intervals are measured by the number of cache insertions between subsequent updates.}
    \vspace{-0.5cm}
\end{figure}

Our results show that both algorithms' performance degrades as the update interval increases. When updates are relatively frequent (i.e., up to 128), the performance of \pgmfna\ and \pgmfno\ is similar. However, a significant gap emerges between the performance of both algorithms for larger update intervals.
In particular, the performance of \pgmfno, which ignores negative indications, quickly degrades, whereas \pgmfna\ shows a considerably milder degradation.
This phenomenon is directly related to the fact that when the update interval is large, the false-negative ratio increases significantly (as demonstrated in Fig.~\ref{fig:Fn_Vs_uInterval}).
Under such regimes, \pgmfno\ fails to access a cache even when the item is available at the cache, whereas \pgmfna\ relies on its false-negative awareness to make accesses even in cases of negative indications, taking into account the false-negative ratio estimation provided by the caches.
Our results imply that \pgmfna\ matches the performance of \pgmfno\ while using a significantly lower bandwidth overhead for cache advertisements.
For instance, for the Wiki workload \pgmfna\ matches the service cost as \pgmfno\ while using 16x
less bandwidth for indicator advertisements.
To see this, notice that \pgmfna's cost using an update interval of $8K$ is on par with that of \pgmfno\ with an update interval of $512$. 

\subsubsection{Indicator size}
Fig.~\ref{Fig:bpe} illustrates our results for varying the size of the indicator being used and advertised by the cache.
We vary the number of indicator bits per cached element (\bpe) and study the impact of the indicator's size on the service cost.
our evaluation compares the performance of \pgmfno\ and \pgmfna\ with update intervals of 256 and 1024.

\begin{figure}[t]
    % \subfloat[]{
    \begin{tikzpicture}
        \begin{groupplot}[
            group style=
                {
                columns = 2,
                rows = 2,
                xlabels at=edge bottom,
                ylabels at=edge left,
                horizontal sep=0.07\columnwidth,
                group name=plots
                },
    		width  = \plotWidth,
     		height = \plotHeight,
     		minor tick style={draw=none},
            ymajorgrids = true,
            ylabel = {Normalized Service Cost},
    		xmin = 5,
     		xmax = 15,
     		xtick = {5,6,7,8,9, 10, 11, 12, 13, 14, 15},
     		xticklabels = {5,, 7,, 9,, 11,, 13,, 15},
      		ymin = 1.0,
      		ymax = 3.51,
     		ytick       = {1, 1.5, 2, 2.5, 3.0, 3.5},
     		yticklabels = {1, 1.5, 2, 2.5, 3.0, 3.5},
     		ylabel near ticks,
            legend columns = 2,    
    		legend style = {at={(-0.1, 0.87)}, anchor=north,legend columns=-1,font=\footnotesize},
    		xlabel = {Indicator Size [bits per element]}, 
    		xlabel near ticks,
    	    label style={font=\footnotesize},
    	    tick label style={font=\footnotesize},
    	]
        	
        \nextgroupplot[
            title = {Wiki, uInterval = 256},
            title style ={
            font=\footnotesize,
            yshift = -2pt,
                },
     		]
     	
% uInterval = 256
		\addplot [color = black, mark = square,      mark options = {mark size = 2, fill = black}, line width = \plotLineWidth] coordinates {
		(5, 1.5643)(6, 1.5698)(7, 1.5552)(8, 1.5594)(9, 1.5689)(10, 1.5627)(11, 1.5564)(12, 1.5586)(13, 1.5613)(14, 1.5600)(15, 1.5561)
        % The results below are when estimating mr1 by hist, rather than by analysis
 		% (5, 1.5628)(6, 1.5696)(7, 1.5586)(8, 1.5564)(9, 1.5605)(10, 1.5565)(11, 1.5526)(12, 1.5552)(13, 1.5539)(14, 1.5544)(15, 1.5567)
		};

		\addplot [color = blue,  mark = *, mark options = {mark size = 2, fill = blue},  line width = \plotLineWidth] coordinates {
		(5, 1.4966)(6, 1.4595)(7, 1.4487)(8, 1.3930)(9, 1.3677)(10, 1.3628)(11, 1.3327)(12, 1.2973)(13, 1.2677)(14, 1.2555)(15, 1.2405)
		};

        \nextgroupplot[
            title = {Wiki, uInterval = 1024},
            title style ={
            font=\footnotesize,
            yshift = -2pt,
                },
      		yticklabels = \empty,
     		]

% uInterval = 1024
		\addplot [color = black, mark = square,      mark options = {mark size = 2, fill = black}, line width = \plotLineWidth] coordinates {
		(5, 2.1211)(6, 2.1794)(7, 2.1834)(8, 2.2236)(9, 2.2598)(10, 2.2592)(11, 2.2748)(12, 2.3011)(13, 2.3274)(14, 2.3163)(15, 2.3342)
        % The results below are when estimating mr1 by hist, rather than by analysis
		%(5, 2.1205)(6, 2.1728)(7, 2.1854)(8, 2.2261)(9, 2.2490)(10, 2.2524)(11, 2.2744)(12, 2.2945)(13, 2.3081)(14, 2.3042)(15, 2.3302)
		};
		\addlegendentry {\pgmfno}

		\addplot [color = blue,  mark = *, mark options = {mark size = 2, fill = blue},  line width = \plotLineWidth] coordinates {
		(5, 1.6672)(6, 1.6392)(7, 1.5904)(8, 1.5265)(9, 1.4882)(10, 1.4697)(11, 1.4614)(12, 1.4206)(13, 1.3717)(14, 1.3632)(15, 1.3303)
		};
		\addlegendentry {\pgmfna}

        \end{groupplot}2
    \end{tikzpicture}
    
    % }
    
    %     \subfloat[]{
    \begin{tikzpicture}
        \begin{groupplot}[
            group style=
                {
                columns = 2,
                rows = 1,
                xlabels at=edge bottom,
                ylabels at=edge left,
                horizontal sep=0.07\columnwidth,
                group name=plots
                },
    		width  = \plotWidth,
     		height = \plotHeight,
     		minor tick style={draw=none},
            ymajorgrids = true,
            ylabel = {Normalized Service Cost},
    		xmin = 5,
     		xmax = 15,
     		xtick = {5,6,7,8,9, 10, 11, 12, 13, 14, 15},
     		xticklabels = {5,, 7,, 9,, 11,, 13,, 15},
      		ymin = 1.0,
      		ymax = 3.51,
     		ytick       = {1, 1.5, 2, 2.5, 3.0, 3.5},
     		yticklabels = {1, 1.5, 2, 2.5, 3.0, 3.5},
     		ylabel near ticks,
    		xlabel = {Indicator Size [bits per element]}, 
    		xlabel near ticks,
    	    label style={font=\footnotesize},
    	    tick label style={font=\footnotesize},
    	]
        	
        \nextgroupplot[
            title = {Gradle, uInterval = 256},
            title style ={
            font=\footnotesize,
            yshift = -2pt,
                },
     		]
     		
% uInterval = 256
		\addplot [color = black, mark = square,      mark options = {mark size = 2, fill = black}, line width = \plotLineWidth] coordinates {
		(5, 2.1701)(6, 2.1887)(7, 2.1829)(8, 2.1842)(9, 2.2100)(10, 2.2017)(11, 2.1975)(12, 2.2120)(13, 2.2113)(14, 2.2053)(15, 2.2021)
        % The results below are when estimating mr1 by hist, rather than by analysis
 		% (5, 2.1585)(6, 2.1795)(7, 2.1794)(8, 2.1837)(9, 2.1893)(10, 2.1897)(11, 2.1976)(12, 2.2029)(13, 2.1964)(14, 2.1963)(15, 2.2051)
		};

		\addplot [color = blue,  mark = *, mark options = {mark size = 2, fill = blue},  line width = \plotLineWidth] coordinates {
		(5, 2.0343)(6, 1.9772)(7, 1.9865)(8, 1.9539)(9, 1.9337)(10, 1.9020)(11, 1.8713)(12, 1.8203)(13, 1.8065)(14, 1.8054)(15, 1.7622)
		};

        \nextgroupplot[
            title = {Gradle, uInterval = 1024},
            title style ={
            font=\footnotesize,
            yshift = -2pt,
                },
      		yticklabels = \empty,
     		]

% uInterval = 1024
		\addplot [color = black, mark = square,      mark options = {mark size = 2, fill = black}, line width = \plotLineWidth] coordinates {
        (5, 3.2252)(6, 3.2869)(7, 3.3086)(8, 3.3173)(9, 3.3672)(10, 3.3468)(11, 3.3944)(12, 3.3576)(13, 3.3819)(14, 3.3449)(15, 3.3449)
        % The results below are when estimating mr1 by hist, rather than by analysis
        % 		(5, 3.2256)(6, 3.2871)(7, 3.3080)(8, 3.3171)(9, 3.3196)(10, 3.3467)(11, 3.3583)(12, 3.3438)(13, 3.3585)(14, 3.3327)(15, 3.3449)
		};		
% 		\addlegendentry {\pgmfno}

		\addplot [color = blue,  mark = *, mark options = {mark size = 2, fill = blue},  line width = \plotLineWidth] coordinates {
		(5, 2.6643)(6, 2.5675)(7, 2.6229)(8, 2.4754)(9, 2.3749)(10, 2.3340)(11, 2.3008)(12, 2.1971)(13, 2.2525)(14, 2.1165)(15, 2.0733)
		};		
% 		\addlegendentry {\pgmfna}

        \end{groupplot}
    \end{tikzpicture}
    % }
    
    \caption{\label{Fig:bpe}Normalized cost of the heterogeneous 3-caches baseline scenario for varying indicator sizes, measured by bits-per-cached-element (\bpe).
   }
\end{figure}
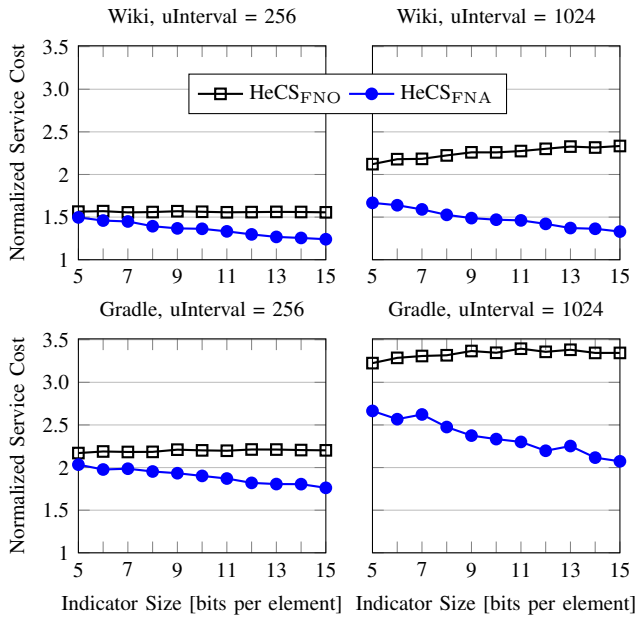
As expected, \pgmfna's performance improves when increasing the indicator size as larger indicators exhibit fewer false-positive errors. 
Interestingly, and somewhat counter-intuitively, there exist cases where the performance of \pgmfno\ does not improve when increasing the indicator size, and in some cases, performance actually {\em degrades}.
% Since larger indicators are often desired (given the overheads are acceptable), we turn to discuss these unexpected and counter-intuitive results. 
To explain this anomaly, let us understand the impact of false-positive and false-negative indications and their interplay.
First, note that the false-positive rate is often inversely proportional to the false-negative rate.
I.e., a constant decrease in the false-positive ratio is usually associated with an increase in the false-negative ratio. An extreme case occurs when all indications are negative, thus exhibiting a false-positive ratio of 0 and a sizeable false-negative ratio.
Next, note that a false-positive event typically translates to an unnecessary cache access, resulting in a relatively small penalty (e.g., an access cost of 1, 2, or 3 in our evaluation).
However, a false-negative event typically translates to a ``non-compulsory'' miss, translating to a high miss penalty (e.g., 100, in our evaluation).
It follows that even a mild decrease in the false-positive ratio may result in a non-negligible increase in the false-negative ratio that may nullify its benefits.  Such effects are especially significant when the miss penalty is high, which is common as misses often result in accessing memories whose access time may be orders of magnitude higher than that of the cache~\cite{adaptive_cache, justify_high_missp}.
Still, our proposed false-negative aware algorithm \pgmfna\ handles such scenarios seamlessly and benefits from the reduced false-positive ratio without adverse performance impact.

% The explanation for this phenomenon lies in the large gap between the penalties of a false-positive event, and a false-negative event. A false-positive event typically results in a non-compulsory cache access, whose cost is typically small (e.g., 1, 2, and 4 in our simulation's settings). A false-negative event, however, results in a non-compulsory memory access, thus incurring a high miss penalty (100, in our simulation). Exhibiting more false-positives slightly 

% \itamar{A real scenario (from tracking the simulation), which explains why even with uInterval = 1, increasing \bpe\ doesn't reduce the cost: $c_1 = 2, c_2 = 4, \mr_1 = 2^{-30}, \mr_2 = 0.0025, \missp = 100$. Then, the expected cost of accessing only cache 1 is $2 + 0.0025 \cdot 100 = 2.25$. The expected cost of accessing only cache 2 is $4 + 2^{-30} \cdot 100 \approx 4$. Hence, the algorithm accesses cache 1. However, it turns out that cache 1 is a false-positive; the requested is actually found only in cache 2. 

% Now, when the indicator is {\em smaller}, it is less accurate, thus increasing $\mr_2$. At some stage, the expected cost of accessing only cache 1 becomes higher than the expected cost of accessing only cache 2, so the alg' accesses cache 2, resulting in a hit.

% This scenario happened in: wiki.C10K.bpe9.1000Kreq.3DSs.Kloc1.M100.B0.U1.FNO, req\_cnt = 86302, req\_key = 50944.
% }
\subsection{Impact of Caching Capacity}
%Next, we study the effect of the caching architecture on the performance of our algorithm.
We now study the effect of having a larger or more diverse caching capacity on system performance.
For such an evaluation, we use 4.3M requests from the Wiki trace (instead of 1M).
Further, we now consider the {\em actual} mean cost per request (and not the normalized service cost) as the cost of \pif\  decreases when increasing the caching capacity. 

%containing 4.3M requests and 394K unique elements. 
%The cost of \pif\ drops when increasing the caching capacity as more items are stored in at least one of the caches, 
%Therefore, to provide better insight into the system's performance, throughout this section, we consider the {\em actual} mean cost per request (and not the normalized service cost, as done in previous sections).

\subsubsection{Scaling the cache size}
\addtolength{\textfloatsep}{-0.2in}
\begin{figure}
    \begin{tikzpicture}
        \begin{groupplot}[
            group style=
                {
                columns = 2,
                % rows = 2,
                xlabels at=edge bottom,
                ylabels at=edge left,
                horizontal sep=0.07\columnwidth,
                group name=plots
                },
    		width  = \plotWidth,
     		height = \plotHeight,
     		minor tick style={draw=none},
            ymajorgrids = true,
    		ylabel = Service Cost,
    		xmode = log,
    		ymin = 10,
    		ymax = 50,
    		ytick = {20, 30, 40},
    		yticklabels = {20, 30, 40},
    		xtick = {1, 2, 4, 8, 16, 32},
    		xticklabels = {1K, 2K, 4K, 8K, 16K, 32K},
            xmin = 1,
            xmax = 32,
     		ylabel near ticks,
            legend cell align={left},
    		legend style = {at={(-0.1, \plotLegendYOffset)}, anchor=north, legend columns=-1, font=\footnotesize},
    		xlabel = Cache Size,
    		xlabel near ticks,
    	    label style={font=\footnotesize},
    	    tick label style={font=\footnotesize},
    	]
        	
        \nextgroupplot[
            title = {update Interval = 256},
            title style ={
            font=\footnotesize,
            yshift = -2pt,
                },
     		]

		\addplot [color = green, mark=+, line width = \plotLineWidth] coordinates {
		(1, 34.8666)(2, 25.0614)(4, 15.2539)(8, 14.3832)(16, 13.4683)(32, 12.5000)(32, 12.5000)
		};		

		\addplot [color = black, mark = square,      mark options = {mark size = 2, fill = black}, line width = \plotLineWidth] coordinates {
		(1, 38.5447)(2, 27.9075)(4, 17.7871)(8, 16.5722)(16, 15.5163)(32, 15.1277)
		};		
% 		\addlegendentry {\pgmfno, uInterval = 256}

		\addplot [color = blue,  mark = *, mark options = {mark size = 2, fill = blue},  line width = \plotLineWidth] coordinates {
		(1, 37.7650)(2, 27.4837)(4, 17.9997)(8, 15.9928)(16, 14.9447)(32, 13.8090)
		};		
% 		\addlegendentry {\pgmfna, uInterval = 256}

        \nextgroupplot[
            title = {update Interval = 1024},
            title style ={
            font=\footnotesize,
            yshift = -2pt,
                },
     		]

		\addplot [color = green, mark=+, line width = \plotLineWidth] coordinates {
		(1, 34.8666)(2, 25.0614)(4, 15.2539)(8, 14.3832)(16, 13.4683)(32, 12.5000)(32, 12.5000)
		};		
		\addlegendentry {\pif}

		\addplot [color = black, mark = square,      mark options = {mark size = 2, fill = black}, line width = \plotLineWidth] coordinates {
		(1, 45.7890)(2, 35.3550)(4, 25.8854)(8, 23.7646)(16, 22.4458)(32, 21.0875)
		};		
		\addlegendentry {\pgmfno}

		\addplot [color = blue,  mark = *, mark options = {mark size = 2, fill = blue},  line width = \plotLineWidth] coordinates {
		(1, 39.4406)(2, 29.1637)(4, 18.5417)(8, 17.7776)(16, 17.1452)(32, 16.1261)
		};		
		\addlegendentry {\pgmfna}

        \end{groupplot}
    \end{tikzpicture}
    \caption{\label{Fig:cache_size}Cost of the heterogeneous 3-caches baseline for varying cache size.}
\end{figure}
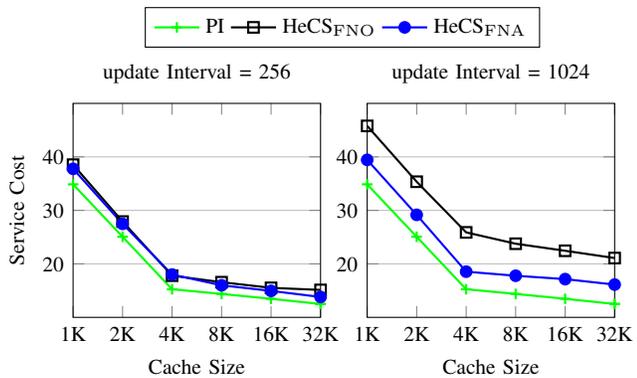
\begin{figure}
    \begin{tikzpicture}
        \begin{groupplot}[
            group style=
                {
                columns = 2,
                % rows = 2,
                xlabels at=edge bottom,
                ylabels at=edge left,
                horizontal sep=0.07\columnwidth,
                group name=plots
                },
    		width  = \plotWidth,
     		height = \plotHeight,
     		minor tick style={draw=none},
            ymajorgrids = true,
    		ylabel = Service Cost,
            xmin = 1,
            xmax = 8,
    		xtick = {1, 2, 3, 4, 5, 6, 7, 8},
    		xticklabels = {1, 2, 3, 4, 5, 6, 7, 8},
     		ymin = 12,
     		ymax = 28,
     		ylabel near ticks,
            legend cell align={left},
    		legend style = {at={(-0.1, \plotLegendYOffset)}, anchor=north, legend columns=-1, font=\footnotesize},
    		xlabel = Number of Caches,
    		xlabel near ticks,
    	    label style={font=\footnotesize},
    	    tick label style={font=\footnotesize},
    	]
        	
        \nextgroupplot[
            title = {update Interval = 256},
            title style ={
            font=\footnotesize,
            yshift = -2pt,
                },
     		]

		\addplot [color = green, mark=+, line width = \plotLineWidth] coordinates {
		(1, 15.4953)(2, 14.6255)(3, 14.1027)(4, 13.7243)(5, 13.4282)(6, 13.1790)(7, 12.9644)(8, 12.7762)
		};

		\addplot [color = black, mark = square,      mark options = {mark size = 2, fill = black}, line width = \plotLineWidth] coordinates {
		(1, 16.3143)(2, 16.0114)(3, 16.2093)(4, 16.4719)(5, 16.9283)(6, 17.7537)(7, 18.3746)(8, 19.0733)
        % The results below are when estimating mr1 by hist, rather than by analysis
		% (1, 16.3143)(2, 16.0176)(3, 16.1584)(4, 16.4645)(5, 16.9152)(6, 17.6397)(7, 18.3782)(8, 19.1007)
		};

		\addplot [color = blue,  mark = *, mark options = {mark size = 2, fill = blue},  line width = \plotLineWidth] coordinates {
		(1, 15.7881)(2, 15.3972)(3, 16.2298)(4, 16.0001)(5, 16.5326)(6, 17.3598)(7, 18.2090)(8, 18.9169)
		};		

        \nextgroupplot[
            title = {update Interval = 1024},
            title style ={
            font=\footnotesize,
            yshift = -2pt,
                 },
      		yticklabels = \empty,
     		]

% uInterval = 1024
		\addplot [color = green, mark=+, line width = \plotLineWidth] coordinates {
		(1, 15.4953)(2, 14.6255)(3, 14.1027)(4, 13.7243)(5, 13.4282)(6, 13.1790)(7, 12.9644)(8, 12.7762)
		};		
		\addlegendentry {\pif}

		\addplot [color = black, mark = square,      mark options = {mark size = 2, fill = black}, line width = \plotLineWidth] coordinates {
		(1, 18.0085)(2, 20.7707)(3, 23.2892)(4, 24.8814)(5, 25.7820)(6, 26.6699)(7, 27.1717)(8, 27.7612)
        % The results below are when estimating mr1 by hist, rather than by analysis
		% (1, 18.0085)(2, 20.8231)(3, 23.2587)(4, 24.7282)(5, 25.8458)(6, 26.6024)(7, 27.1507)(8, 27.7549)
		};		
		\addlegendentry {\pgmfno}

		\addplot [color = blue,  mark = *, mark options = {mark size = 2, fill = blue},  line width = \plotLineWidth] coordinates {
		(1, 16.0475)(2, 15.8436)(3, 16.0115)(4, 17.7375)(5, 20.7803)(6, 23.1451)(7, 24.7284)(8, 25.8567)
		};		
		\addlegendentry {\pgmfna}

        \end{groupplot}
    \end{tikzpicture}
    \caption{\label{Fig:num_caches}Effect of varying the number of caches on the service cost. The access cost to each cache is 2, and the miss penalty is 100.}
\end{figure}
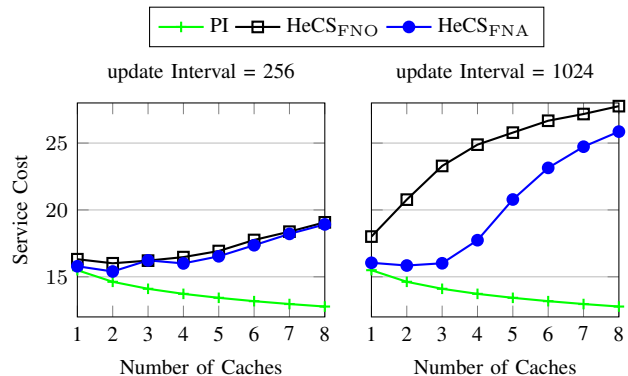

We study the impact of the cache size on the performance of \pgmfno\ and \pgmfna\ with update interval of 256 and 1024. The results in Fig.~\ref{Fig:cache_size} show that, as could be expected, for every given setting, scaling-up the caches' capacities decreases the service cost due to the improved hit ratio.
Our results show that when updates are relatively frequent (e.g., the case where the update interval is 256), the performance of \pgmfno\ is comparable to that of \pgmfna\, and they both exhibit a performance close to that of the ideal \pif\ strategy.
However, once the updates are less frequent (e.g., the case where the update interval is 1024), \pgmfno\ exhibits a significant degradation in performance.
\pgmfna, on the other hand, is far less affected by the increase in the update interval, and is still quite comparable to \pif.
In general, \pgmfna\ shows up to 25\% reduction in cost compared to \pgmfno.
The differences between \pgmfno\ and \pgmfna\ become more accentuated when one considers the cache size required to maintain a certain level of cost; \pgmfna\ performs better with 4K items caches, than \pgmfno\ with caches of size 32K. 

% However, there are considerable differences between different settings for the same cache size.
% % The settings considered are the perfect indicator, PI; and the algorithms \pgmfno\ and \pgmfna\ with update interval = 16, 256, and 1024.
% \pif\ obtains the lowest service cost and is represented by a single plot as it has perfect information (and does not use indicators). 
% When the update interval is only 16, the probability of a false-negative is negligible.
% Thus \pgmfna\ performs no speculative access and is identical to \pgmfno, where both obtain relatively low, close-to-optimal service cost.
% However, increasing the update interval unveils large differences between the algorithms: \pgmfna\ tolerates high update intervals, thus achieving low service cost even when the update interval is 1024. On the other hand, increasing the update interval increases the service cost obtained by \pgmfno. Most notable, the service cost of \pgmfna\ with an update interval of 1024 is lower than those of \pgmfno\ with an update interval of 256. This variation implies that \pgmfna\ outperforms \pgmfno, even when using virtually only 25\% of the bandwidth for advertising updates. 

\subsubsection{Scaling the number of caches}
We now vary the number of caches in homogeneous settings.
All caches have an access cost of 2, ensuring that the average access cost is the same as in other scenarios examined in our evaluation.
Fig.~\ref{Fig:num_caches} shows the results for update intervals of 256 and 1024.
Notice that \pgmfna\ consistently outperforms \pgmfno, and the difference is more significant for large update intervals.
The results also imply that having more caches may hinder the performance of \pgmfna\ and \pgmfno.
Intuitively, in such a case, there are more false positives, and it is harder to guarantee that we access true positive items. Similarly, there are more negative indications which makes it harder for \pgmfna\ to identify a false-negative.

\revision{
\subsection{Impact of Exclusion Probabilities Estimations}
\label{sec:sim:estimations}

\begin{table}[t]
    \centering
    \revision{
    \caption{\label{tab:cache_aware}Normalized service cost of \pgmfno, \pgmfna, and \pgmfnastar\ for the baseline scneario.}
    \begin{tabular}{|c||c|c|c|}
    \hline 
    Trace &		 \pgmfno &	 \pgmfna\ 	 & \pgmfnastar \tabularnewline  \hline \hline 
    wiki	&	 2.2975 &	 1.3606 &	 1.1533   \tabularnewline \hline
    gradle	&	 3.3669 &	 2.1470 &	 1.3005   \tabularnewline \hline
    scarab	&	 2.9899 &	 1.6959 &	 1.0926   \tabularnewline \hline
    F2		&	 2.7732 &	 1.6695 &	 1.1079   \tabularnewline \hline\end{tabular}
    }
\end{table}

Our \pgmfna\ algorithm estimates the exclusion probabilities $\mrj$, as detailed in  Sec.~\ref{sec:hetero:estimation}. 
In this section, we study the effect of our estimations of the exclusion probabilities on the service cost. Specifically, we study a much stronger estimation (which is mostly impractical), and show that our approach captures the trends offered by such unrealistic estimations.

Upon every request, we define the {\em cache-aware} estimations of the exclusion probabilities.
In such an estimation of, say, $\mronej$, we consider the ratio between the number of false-positive indications of indicator $j$ and the total number of positive indications produced by indicator $j$. This ratio is computed for requests that arrived since the last advertisement was received. We estimate $\mrzeroj$ in a similar way.
We note that in order to obtain such estimations, one needs to know the {\em actual} content of each cache prior to every request (in order to distinguish between false and true indications), which is effectively impractical without accessing each cache for every request.

In what follows, we let \pgmfnastar\ denote our \pgmfna\ algorithm which uses the unrealistic cache-aware estimations, instead of the estimations described in Sec.~\ref{sec:hetero:estimation}.
Table~\ref{tab:cache_aware} shows the normalized service cost of \pgmfnastar, compared to that of \pgmfna\ and \pgmfno\ (for our baseline scenario).
Indeed, there is no wonder that the cache-aware estimations induce a lower cost than our proposed estimation method (as they are effectively all-knowledgeable in their estimations).
However, we can see that our approach indeed provides most of the benefits possible within such a system, compared to the FN-oblivious approach, while using a light-weight estimation method, that doesn't require or resort to having global and up-to-date information of the content of the caches.
}

\section{Conclusions}
This work studies the cache selection problem while using approximate indicators exhibiting both false-positive and false-negative errors.
The client in such a system selects a subset of the caches to minimize the expected service cost.
While there is extensive work in this field, all previous access strategies do not access caches with negative indications. 
% That is, the access strategy is devised only from the caches indicating that they have the item. 
While reasonable at first glance, our work shows that such an omission severely hinders the system's performance.
We argue that caches that periodically advertise their content indicators inherently introduce false-negative indications, and the rate of such indications is non-negligible.
In particular, we show that it is sometimes advisable to access caches with a negative indication, as it may reduce the overall system cost. 

We devise false-negative-aware access strategies in two main scenarios:
\begin{inparaenum}[(i)]
\item fully-homogeneous settings, where we show a policy that attains the optimal (minimal) access cost, and
\item general heterogeneous environments, where we present a strategy for which we can bound its approximation guarantee compared to the optimal solution.
\end{inparaenum}
We complete our study through an extensive evaluation based on real system traces.
Our results show that our proposed methods perform significantly better than the state-of-the-art in diverse settings.
Furthermore, our false-negative aware solutions can match the cost of competitive false-negative oblivious approaches while requiring an order of magnitude fewer resources (e.g., caching capacity or bandwidth required for indicators advertisement).

Our results demonstrate the potential benefits of embracing false-negative awareness into the algorithmic design space.
We expect our work to further induce both analytical and experimental research on the role of false-negatives in large distributed systems,
including dealing with non-homogeneous object size, adhering to bandwidth constraints, and studying correlated distributed caching schemes.

\bibliographystyle{IEEEtran}
\bibliography{Refs}

\end{document}
%  \itamar{TBD:
% May add to the journal version: show that PGM would always choose a negative-indication cache only after it used only the positive-indication cache.
% }